% This is samplepaper.tex, a sample chapter demonstrating the
% LLNCS macro package for Springer Computer Science proceedings;
% Version 2.20 of 2017/10/04
%
\documentclass[runningheads]{llncs}
\usepackage{graphicx}

\usepackage{amsmath}
\usepackage{amssymb}
%\usepackage{amsthm}
%somehow amsthm does not go well with LNCS template (complains that command \proof is already defined...)
\usepackage{commath}
\usepackage{mathtools}
\usepackage{hyperref}
\usepackage{todonotes}
\usepackage{tikz}
\usepackage{caption}
\usepackage{subcaption}
% Used for displaying a sample figure. If possible, figure files should
% be included in EPS format.
%
% If you use the hyperref package, please uncomment the following line
% to display URLs in blue roman font according to Springer's eBook style:
% \renewcommand\UrlFont{\color{blue}\rmfamily}

\usepackage{thmtools} 
\usepackage{thm-restate}

%\bibliographystyle{plainurl}% the mandatory bibstyle

%\bibliography{references}
\bibliographystyle{splncs04}

\graphicspath{{graphics/}}

\newcommand{\eps}{\varepsilon}

\DeclareMathOperator{\ex}{ex}
\DeclareMathOperator{\capacity}{cap}
\DeclareMathOperator{\ch}{ch}

\begin{document}
	
	\title{Dynamic Flows with Time-Dependent Capacities}
	
	%\author{Thomas {Bläsius}}{Karlsruhe Institute of Technology (KIT), Germany}{thomas.blaesius@kit.edu}{https://orcid.org/0000-0003-2450-744X}{}
	%
	%\author{Adrian {Feilhauer}}{Karlsruhe Institute of Technology (KIT), Germany}{adrian.feilhauer@kit.edu}{}{}
	%
	%\author{Jannik {Westenfelder}}{Karlsruhe Institute of Technology (KIT), Germany}{}{}{}
	
	\author{Thomas Bläsius\inst{1}\orcidID{0000-0003-2450-744X} \and Adrian Feilhauer\inst{1} \and Jannik Westenfelder\inst{1}}
	
	\institute{Karlsruhe Institute of Technology (KIT), 76131 Karlsruhe, Germany\\
		\email{\{thomas.blaesius, adrian.feilhauer\}@kit.edu}}
	
	\authorrunning{T. Bl\"asius, A. Feilhauer, J. Westenfelder}
	
	%\Copyright{T. Bl\"asius, A. Feilhauer, J. Westenfelder}
	
	%\begin{CCSXML}
	%<ccs2012>
	%   <concept>
	%       <concept_id>10003752.10003809.10003635.10003644</concept_id>
	%       <concept_desc>Theory of computation~Network flows</concept_desc>
	%       <concept_significance>500</concept_significance>
	%       </concept>
	%   <concept>
	%       <concept_id>10003752.10010061</concept_id>
	%       <concept_desc>Theory of computation~Randomness, geometry and discrete structures</concept_desc>
	%       <concept_significance>300</concept_significance>
	%       </concept>
	% </ccs2012>
	%\end{CCSXML}
	%
	%\ccsdesc[500]{Theory of computation~Network flows}
	%\ccsdesc[300]{Theory of computation~Randomness, geometry and discrete structures}
	
	%\keywords{network flow, dynamic flow, flow over time, dynamic cut, cut over time, time-dependent capacities} 
	
	\maketitle
	\begin{abstract}
		Dynamic network flows, sometimes called flows over time, extend the
		notion of network flows to include a transit time for each edge.
		While Ford and Fulkerson showed that certain dynamic flow problems can be solved via
		a reduction to static flows, many advanced models considering
		congestion and time-dependent networks result in NP-hard problems.
		To increase understanding of these advanced dynamic flow settings we
		study the structural and computational complexity of the canonical
		extensions that have time-dependent capacities or time-dependent
		transit times.
		
		If the considered time interval is finite, we show that already a
		single edge changing capacity or transit time once makes the
		dynamic flow problem weakly NP-hard.
		In case of infinite considered time, one change in transit time or two changes in capacity make the problem weakly NP-hard.  For just one capacity change, we conjecture that the problem can be solved in polynomial time.
		Additionally, we show the structural property that dynamic cuts
		and flows can become exponentially complex in the above settings
		where the problem is NP-hard.  We further show that, despite the
		duality between cuts and flows, their complexities can be
		exponentially far apart.
	\end{abstract}
	
	\newpage %CIAC requires first non abstract chapter to start not on the title-page
	
	\section{Introduction}
	Network flows are a well established way to model transportation of
	goods or data through systems representable as graphs. Dynamic flows
	(sometimes called flows over time) include the temporal component by
	considering the time to traverse an edge. They were introduced by Ford
	and Fulkerson \cite{ford1958constructing}, who showed that maximum dynamic flows in static networks can be found using \emph{temporally repeated flows}, which send flow over paths of a static maximum flow as long as possible.
	
	Since capacities in real-world networks tend to be more dynamic,
	several generalizations have been considered in the literature. One
	category here is congestion modeling networks, where transit times of
	edges can depend on the flow routed over them
	\cite{kohler2005flows,langkau2003flows}. Other generalizations model
	changes in the network independently from the routed flow
	\cite{halpern1979generalized,tjandra2003dynamic,sha2000maximum}.  This
	makes it possible to model known physical changes to the network and
	allows for situations, where we have estimates of the overall
	congestion over time that is caused by external entities that are not
	part of the given flow problem.  There are also efforts to include
	different objectives for the flow, e.g., for evacuation scenarios, it
	is beneficial for a flow to maximize arrival for all times, not just
	at the end of the considered time interval \cite{disser2018simplex}.
	
	Most problems modeling congestion via flow-dependent transit times are
	NP-hard.  If the transit time depends on the current load of the edge,
	the flow problems become strongly NP-hard and no $\eps$ approximation
	exists unless $\text{P}=\text{NP}$ \cite{kohler2005flows}.  If the
	transit time of an edge instead only depends on its inflow rate while
	flow that entered the edge earlier is ignored the flow problems are
	also strongly NP-hard \cite{langkau2003flows}.
	When allowing to store flow at vertices, pseudo-polynomial algorithms
	are possible if there are time-dependent
	capacities~\cite{halpern1979generalized} and if there additionally are
	time-dependent transit times~\cite{tjandra2003dynamic,sha2000maximum}.
	In the above mentioned evacuation scenario, one aims at finding the
	so-called earliest arrival flow (EAF).  It is also NP-hard in the
	sense that it is hard to find the average arrival time of such a flow
	\cite{disser2018simplex}.  Moreover, all known algorithms to find EAFs
	have worst case exponential output size for all known encodings~\cite{disser2018simplex}.
	
	In this paper, we study natural generalizations of dynamic flows that
	have received little attention so far, allowing time-dependent
	capacities or time-dependent transit times. We prove that
	finding dynamic flows with time-dependent capacities or time-dependent
	transit times is weakly NP-hard, even if the graph is acyclic and only
	a single edge experiences a capacity change at a single point in time.
	This shows that a single change in capacity already increases the
	complexity of the -- otherwise polynomially solvable -- dynamic flow
	problem.  It also implies that the dynamic flow problem with
	time-dependent capacities is not FTP in the number of capacity
	changes.  The above results hold in the setting where the considered
	time interval is finite.  If we instead consider infinite time, the
	results remain the same for time-dependent transit times.  For
	time-dependent capacities, two capacity changes make the problem
	weakly NP-hard. We conjecture that it can be solved in polynomial when there is only one change.
	
	Beyond these results on the computational complexity, we provide
	several structural insights.  For static flows, one is usually not
	only interested in the flow value but wants to output a maximum flow
	or a minimum cut.
	The concept of flows translates more or less
	directly to the dynamic setting~\cite{ford1958constructing}, we need
	to consider time-dependent flows and cuts if we have time-dependent
	capacities or transit times.  In this case, instead of having just one
	flow value per edge, the flow is a function over time.  Similarly, in
	a dynamic cut, the assignment of vertices to one of two partitions
	changes over time.  The cut--flow duality, stating that the capacity
	of the minimum cut is the same as the value of the maximum flow also
	holds in this and many related settings
	\cite{koch2011continuous,philpott1990continuous,tjandra2003dynamic}.
	Note that the output complexity can potentially be large if the flow
	on an edge or the partition of a vertex in a cut changes often.  For
	dynamic flows on static graphs (no changes in capacities or transit
	times) vertices start in the target vertices' partition and at some
	point change to the source partition, but never the other way
	\cite{skutella2009introduction}, which shows that cuts have linear
	complexity in this setting.
	
	In case of time-dependent capacities or transit times, we show that
	flow and cut complexity are sometimes required to be exponential.
	Specifically, for all cases where we show weak NP-hardness, we also
	give instances for which every maximum flow and minimum cut have
	exponential complexity.  Thus, even a single edge changing capacity or
	transit time once can jump the output complexity from linear to
	exponential.  Moreover, we give examples where the flow complexity is
	exponential while there exists a cut of low complexity and vice versa.
	
	We note that the scenario of time-dependent capacities has been
	claimed to be strongly NP-complete~\cite{sha2000maximum} before.
	However, we suspect the proof to be flawed as one can see that this
	scenario can be solved in pseudo-polynomial time.  Moreover, the above
	mentioned results on the solution complexity make it unclear whether
	the problem is actually in NP. In Appendix~\ref{sec:strong-np-compl},
	we point out the place where we believe the proof for strong
	NP-hardness is flawed.
	
	\section{Preliminaries}
	
	We consider dynamic networks $G=(V,E)$ with directed edges and
	designated source and target vertices $s,t\in V$. Edges $e=(v,w)\in E$
	have a time-dependent non negative \emph{capacity}
	$u_e \colon [0,T]\to\mathbb{R}^+_0$, specifying how much flow can
	enter $e$ via $v$ at each time.  We allow $u_e$ to be non-continuous
	but only for finitely many points in time.  In addition, each edge
	$e=(v,w)$ also has a non negative \emph{transit time}
	$\tau_e\in\mathbb{R}^+$, denoting how much time flow takes to move
	from $v$ to $w$ when traversing $e$.  Note that the capacity is
	defined on $[0, T]$, i.e., time is considered from $0$ up to a
	\emph{time horizon} $T$.
	
	Let $f$ be a collection of measurable functions
	$f_e \colon [0, T - \tau_e] \to \mathbb R$, one for each edge
	$e \in E$, assigning every edge a flow value depending on the time.
	The restriction to the interval $[0, T - \tau_e]$ has the
	interpretation that no flow may be sent before time $0$ and no flow
	should arrive after time $T$ in a valid flow.  To simplify notation,
	we allow time values beyond $[0, T - \tau_e]$ and implicitly assume
	$f_e(\Theta) = 0$ for $\Theta\notin[0, T - \tau_e]$.  We call $f$ a
	\emph{dynamic flow} if it satisfies the \emph{capacity constraints}
	$f_e(\Theta) \le u_e(\Theta)$ for all $e \in E$ and
	$\Theta \in [0, T - \tau_e]$, and \emph{strong flow conservation}, which we
	define in the following.
	
	The \emph{excess flow} $\ex_f(v,\Theta)$ of a vertex
	$v$ at time $\Theta$ is the difference between flow sent to $v$ and
	the flow sent from $v$ up to time $\Theta$,
	i.e.,
	\begin{equation*}
	\ex_f(v,\Theta)\coloneqq\int_{0}^{\Theta}\sum_{e = (u, v) \in E}f_e(\zeta -
	\tau_e)
	- \sum_{e = (v, u) \in E}f_e(\zeta) \dif\zeta.
	\end{equation*}
	We have strong flow conservation if $\ex_f(v, \Theta) = 0$
	for all $v \in V\setminus\{s,t\}$ and $\Theta\in[0,T]$.
	
	The \emph{value} of $f$ is defined as the excess of the target vertex
	at the time horizon $|f|\coloneqq \ex_f(t,T)=-\ex_f(s,T)$.
	The \emph{maximum dynamic flow problem with time-dependent capacities} is
	to find a flow of maximum value.  We refer to its input as
	\emph{dynamic flow network}.
	
	A cut-flow duality similar to the one of the static maximum flow
	problem holds for the maximum dynamic flow problem with the following
	cut definition. A \emph{dynamic cut} or \emph{cut over time} is a
	partition of the vertices $(S,V\setminus S)$ for each point in time,
	where the source vertex $s$ always belongs to $S$ while the target $t$
	never belongs to $S$. Formally, each vertex $v\in V$ has a boolean
	function $S_v\colon [0,T]\to\{0,1\}$ assigning $v$ to $S$ at time
	$\Theta$ if $S_v(\Theta) = 1$.  As for the flow, we extend $S_v$
	beyond $[0, T]$ and set $S_v(\Theta) = 1$ for $\Theta>T$ for all
	$v\in V$ (including $t$). The \emph{capacity} $\capacity(S)$ of a
	dynamic cut $S$ is the maximum flow that could be sent on edges from
	$S$ to $V\setminus S$ during the considered time interval $[0,T]$,
	i.e.,
	\begin{equation*}
	\capacity(S)=\int_{0}^{T}\sum_{\substack{(v,w)\in E\\S_v(\Theta)=1\\ S_w(\Theta+\tau_e)=0}}u_{(v,w)}(\Theta)\dif\Theta.
	\end{equation*}
	An edge $(v,w)$ \emph{contributes} to the cut $S$ at time $\Theta$ if it contributes to the above sum, so $S_v(\Theta)=1\wedge S_w(\Theta+\tau_e)=0$.
	Note that this is similar to the static case, but in the dynamic
	variant the delay of the transit time needs to be considered.  Thus,
	for the edge $e = (v, w)$ we consider $v$ at time $\Theta$ and $w$ at
	time $\Theta + \tau_e$.  Moreover, setting $S_v(\Theta) = 1$ for all
	vertices $v$ if $\Theta > T$ makes sure that no point in time beyond
	the time horizon contributes to $\capacity(S)$.
	
	% \begin{theorem}[Min-Cut Max-Flow Theorem \cite{koch2011continuous}]
	\begin{theorem}[Min-Cut Max-Flow Theorem \cite{philpott1990continuous,tjandra2003dynamic}]
		\label{trm:cut_flow_duality}
		For a maximum flow over time $f$ and a minimum cut over time $S$ it holds $|f|=cap(S)$.
	\end{theorem}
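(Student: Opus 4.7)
The plan is to establish the standard two-sided inequality: weak duality, that any flow value is bounded by any cut capacity, and strong duality, that equality is achieved for some pair. Weak duality follows directly from the definitions and flow conservation, while strong duality is the main technical work.

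For weak duality, I would fix any dynamic flow $f$ and dynamic cut $S$ and express $|f| = \ex_f(t, T)$ through the strong flow conservation constraint. Since $\ex_f(v, \Theta) = 0$ for every $v \in V \setminus \{s,t\}$ and every $\Theta$, and since $t \notin S(\Theta)$ for all $\Theta \in [0,T]$, one has $\tfrac{d}{d\Theta}\ex_f(t, \Theta) = \sum_{v:\, S_v(\Theta)=0} \tfrac{d}{d\Theta}\ex_f(v, \Theta)$ at almost every $\Theta$. Integrating over $[0,T]$ and then shifting the integration variable by $\tau_e$ for each inflow term, the contribution of each edge $e=(a,b)$ reduces to $\int_0^{T-\tau_e}\bigl(\mathbb{1}[S_b(\Theta+\tau_e)=0] - \mathbb{1}[S_a(\Theta)=0]\bigr)f_e(\Theta)\,d\Theta$. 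The integrand equals $+1$ exactly when $e$ contributes to the cut at time $\Theta$, $-1$ when it contributes in reverse, and $0$ otherwise. Bounding $f_e(\Theta) \le u_e(\Theta)$ on the positive part and discarding the nonpositive part gives $|f| \le \capacity(S)$.

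For strong duality, I would follow the approach of the cited works \cite{philpott1990continuous,tjandra2003dynamic}: first treat the case where all $u_e$ are step functions with finitely many breakpoints. Here, one constructs a time-expanded network whose vertex set is $\{(v, \Theta_i)\}$ over the common refinement of all breakpoints, assigns static capacities equal to $u_e(\Theta_i)$ times the length of each interval, and applies the classical static min-cut max-flow theorem. Translating the resulting static cut back into a dynamic cut and the static flow back into a dynamic flow preserves value and capacity, so equality holds. For general measurable $u_e$, I would approximate from below by step functions and take a limit, using that both $|f|$ and $\capacity(S)$ are continuous under monotone convergence of $u_e$. The main obstacle is controlling this limiting argument cleanly, i.e., ensuring that a sequence of optimal flow--cut pairs on the approximating step-function instances has a subsequential limit that is simultaneously an admissible dynamic flow and an admissible dynamic cut on the original instance with matching value; alternatively, invoking an infinite-dimensional LP duality theorem as in \cite{philpott1990continuous} sidesteps this at the cost of a heavier framework.
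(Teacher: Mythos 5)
The paper does not prove this theorem from first principles: its proof is a two-line specialization of the more general results it cites, namely Philpott's Theorem~1 (which allows time-dependent vertex storage, here set identically to zero) and Tjandra's Theorem~3.4 (which also covers time-dependent transit times). Your proposal instead sketches a direct proof, so the routes genuinely differ. Your weak-duality computation is essentially the standard argument and is sound, with one caveat: discarding the ``reverse'' terms requires $f_e \ge 0$, which the paper's definition of a flow (with codomain $\mathbb{R}$) never states explicitly but clearly intends; without nonnegativity that step fails. What each approach buys: yours makes the easy inequality $|f|\le \mathrm{cap}(S)$ self-contained and transparent, while the paper's citation-only proof gets the hard direction for free in the exact generality it needs (including time-dependent transit times).

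For strong duality, however, your text is a plan rather than a proof, and the gaps are real. The time-expanded construction only works cleanly when all capacity breakpoints and all transit times lie on a common grid; with arbitrary real transit times, arrival times of the form $\Theta_i+\tau_e$ (and their iterates along paths) create new breakpoints that a ``common refinement of all breakpoints'' of the capacity functions does not capture, so the translation between static and dynamic cuts/flows needs a more careful discretization argument. The subsequent limiting step for general capacities --- extracting a simultaneous limit flow--cut pair with matching value --- is, as you yourself note, the crux and is left open. Your fallback of invoking Philpott's infinite-dimensional LP duality is legitimate, but then your argument collapses to the same citation the paper uses, with the weak-duality half being the only part you actually establish.
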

	\begin{proof}
		% The setting considered here is a special case of the Borel flows discussed in \cite{koch2011continuous}, so their theorem holds here as well.
		The theorem by Philpott
		\cite[Theorem~1]{philpott1990continuous} is more
		general than the setting considered here.  They in
		particular allow for time-dependent storage
		capacities of vertices.  We obtain the here stated
		theorem by simply setting them to constant zero.
		The theorem by Tjandra \cite[Theorem 3.4]{tjandra2003dynamic} is even more general and thus also covers the setting with time-dependent transit times.
		\qed
	\end{proof}
	
	Though the general definition allows the capacity functions to be
	arbitrary, for our constructions it suffices to use piecewise constant
	capacities.  We note that in this case, there always exists a maximum
	flow that is also piecewise constant, assigning flow values to a set
	of intervals of non-zero measure.  The property that the intervals
	have non-zero measure lets us consider an individual point $\Theta$ in
	time and talk about the contribution of an edge to a cut or flow at time
	$\Theta$, as $\Theta$ is guaranteed to be part of a non-empty interval
	with the same cut or flow.  For the remainder of this paper, we assume
	that all flows have the above property.
	
	We define the following additional useful notation.  We use
	$S(\Theta)\coloneqq\{v\in V\mid S_v(\Theta)=1\}$ and
	$\bar{S}(\Theta)\coloneqq\{v\in V\mid S_v(\Theta)=0\}$ to denote the cut at time
	$\Theta$.  Moreover, a vertex $v$ changes its partition at time
	$\Theta$ if $S_v(\Theta - \eps) \neq S_v(\Theta + \eps)$ for every
	sufficiently small $\eps > 0$.  We denote a change from $S$ to
	$\bar{S}$ with $S_v\xrightarrow{\Theta}\bar{S}_v$ and a change in the other \emph{direction} from
	$\bar{S}$ to $S$ with $\bar{S}_v\xrightarrow{\Theta}S_v$.
	We denote the number of partition changes of a vertex $v$ in a cut $S$
	with $\ch_v(S)$.  Moreover the total number of changes in $S$ is the
	\emph{complexity} of the cut $S$.
	For a flow $f$, we define changes on edges as well as the complexity of $f$ analogously.
	
	In the above definition of the maximum dynamic flow problem we allow
	time-dependent capacities but assume constant transit times.  Most of
	our results translate to the complementary scenario where transit
	times are time-dependent while capacities are constant. In this setting $\tau_e(\Theta)$ denotes how much time flow takes to traverse $e$, if it enters at time $\Theta$. Similarly to the above definition, we allow $\tau_e$ to be non-continuous for finitely many points in time.
	
	Additionally we look at the scenarios where infinite time ($\Theta\in(-\infty,\infty)$) is
	considered instead of only considering times in $[0,T]$.  This removes
	structural effects caused by the boundaries of the considered time
	interval.  Intuitively, because we are working with piecewise constant functions with finitely many incontinuities, there exists a point in time $\Theta$ that is
	sufficiently late that all effects of capacity changes no longer play
	a role.  From that time on, one can assume the maximum flow and
	minimum cut to be constant.  The same holds true for a sufficiently
	early point in time.  Thus, to compare flow values it suffices to look
	at a finite interval $I$.  Formally, $f$ is a maximum dynamic flow
	with \emph{infinite considered time} if it is constant outside of $I$
	and maximum on $I$, such that for any larger interval $J\supset I$
	there exists a large enough interval $K\supset J$ so that a maximum
	flow with considered time interval $K$ can be $f$ during $J$.  Minimum
	cuts with infinite considered time are defined analogously.  Such
	maximum flows and minimum cuts always exist as temporally repeated
	flows provide optimal solutions to dynamic flows and we only allow
	finitely many changes to capacity or traversal time.
	
	We will need the set of all integers up to $k$ and denote it $[k]\coloneqq\{i\in\mathbb{N}^+|i\leq k\}$.
	
	\section{Computational Complexity}
	\label{sec:comp-compl}
	
	In this section we study the computational complexity of the dynamic
	flow problem with time-dependent capacities or transit times.  We
	consider finite and infinite time. For all cases except for a single capacity change with infinite considered time, we prove NP-hardness.
	
	%\subsection{Weak NP-Hardness}
	
	We start by showing hardness in the setting where we have
	time-dependent capacities with only one edge changing capacity once.
	Our construction directly translates to the setting of infinite
	considered time with one edge changing capacity twice. %\footnote{We
	% note that these two capacity changes are required, as this setting
	% is polynomially time solvable with just one change; see
	% Section~\ref{subsec:polynomial_time_solvability_for_infinite_time_and_one_change}.}
	For the case of time-dependent transit times we prove hardness for one
	change even in the infinite considered time setting.  This also
	implies hardness for one change when we have a finite time horizon.
	
	We reduce from the \emph{partition problem}, which is defined as
	follows.  Given a set of positive integers $S=\{b_1,\dots,b_k\}$ with
	$\sum_{i=1}^k b_i=2L$, is there a subset $S'\subset S$ such that
	$\sum_{a\in S'}a=L$?
	
	\begin{theorem}
		\label{trm:weak_NP_hardness}
		The dynamic flow problem with time-dependent capacities is weakly NP-hard, even for acyclic graphs with only one capacity change.
	\end{theorem}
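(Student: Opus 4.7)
The plan is to reduce from \textsc{Partition}. Given positive integers $b_1,\dots,b_k$ with $\sum_i b_i = 2L$, I would construct an acyclic dynamic flow network with one edge undergoing a single capacity change so that the maximum dynamic flow attains a target value $F^\ast = 2L$ if and only if there is a subset $S\subseteq[k]$ with $\sum_{i\in S}b_i = L$. The construction has $O(k)$ nodes and uses the item sizes $b_i$ as transit times in per-item ``choice gadgets''.

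Each item $i$'s gadget offers two routes from the source side to a shared node $a$, both of capacity $b_i$ but with transit times chosen so that flow along one route arrives at $a$ in a ``phase 1'' window $[0,T_0]$ while flow along the other arrives in a ``phase 2'' window $[T_0,T]$. All gadgets feed a single bottleneck edge $e^\ast$ whose capacity undergoes exactly one change at time $T_0$; the two phase capacities and the finite horizon $T$ are calibrated so that each phase can admit at most $L$ units of flow. Achieving the target $2L$ then forces both phases to be saturated simultaneously. For the easy direction, given a partition $S$, I would route items in $S$ via their phase 1 options and items outside $S$ via their phase 2 options, yielding a flow of value $2L$ after verifying capacity, strong flow conservation, and horizon constraints.

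The other direction is the main obstacle. A priori, dynamic flows are continuous LPs, so each item's $b_i$ units could split fractionally across the two phases; in a naive construction, setting each item to contribute $b_i/2$ to each phase would trivially saturate both phases regardless of whether \textsc{Partition} has a solution. The crux of the proof is to exploit the specific transit times together with the single capacity change so that such a fractional split shifts parts of the item's flow into time slots where either the bottleneck or the horizon refuses them, strictly reducing the attainable flow. Concretely, I would carry out a per-time-slice analysis of flow at $a$ on both sides of $T_0$, showing that the only way to hit capacity $L$ in both windows simultaneously is to assign each item's $b_i$ units entirely to one phase; from such an integral assignment the sought partition is immediately read off. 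This ``integrality from transit times'' argument, which lifts the discrete structure of \textsc{Partition} through the continuous LP of dynamic flows, is the technical heart of the reduction.
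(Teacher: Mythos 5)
There is a genuine gap, and it sits exactly at the step you yourself flag as the technical heart. Your construction (two parallel routes per item, each of rate capacity $b_i$, feeding a shared bottleneck whose single capacity change splits time into two windows that each admit $L$ units) gives no mechanism that excludes fractional splits: a single-commodity dynamic flow has no per-gadget coupling, so nothing prevents item $i$ from sending rate $b_i/2$ on both routes (or rate $b_i$ for half of its available time on each), saturating both windows irrespective of whether \textsc{Partition} is solvable. Saying the transit times are ``calibrated'' so that split flow lands in refused time slots is an assertion of the needed property, not a proof of it; as described, both routes of an item deliver flow throughout their respective windows, so shifting part of the item between routes just moves admissible flow between the two windows and the reduction's backward direction collapses. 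Without a concrete construction and a per-time-slice argument that actually forces an all-or-nothing choice per item, the hard direction is unproven, and it is not clear the gadget as sketched can be repaired at all.

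The paper's reduction avoids this issue by a structurally different encoding: the item sizes $b_i$ become transit times on a single chain $x_0,\dots,x_k$, each $b_i$-edge shadowed by a zero-transit bypass, with a tiny source capacity $\tfrac{1}{L+1}$, a sink edge open only during $[L+1,L+2]$, and auxiliary $y_i$-paths realizing every integer delay up to $L-1$. Reaching flow value $1$ forces the flow that leaves $s$ in the first unit interval to traverse some $x_0$--$x_k$ path of total transit time exactly $L$; since the bypass paths top out at $L-1$ and all transit times are integers from the instance, any flow-carrying path in that window directly reads off a subset summing to $L$. The crucial point is that fractional flows are harmless there: a flow decomposes into paths, and a single flow-carrying path of the right length is the witness, so no integrality of the flow assignment ever needs to be argued. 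If you want to salvage your phase-based approach, you would need to build this kind of path-length certificate into the construction rather than relying on per-item exclusivity that the flow model cannot enforce.
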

	\begin{proof}
		Given an instance of the partition problem, we construct $G=(V,E)$
		as shown in
		Figure~\ref{fig:NP_reduction_timeDependentCapacitiesSingleChange} and show that a solution to partition is equivalent to a flow of value $1$ in $G$.
		
		Every $b_i\in S$ corresponds to a vertex $x_i$ which can be reached
		by $x_{i-1}$ with one edge of transit time $b_i$ and one bypass edge of
		transit time zero. The last of these vertices $x_k$ is connected to the
		target $t$ with an edge only allowing flow to pass during
		$[L+1,L+2]$, where the lower border is ensured by the capacity
		change of $(x_k,t)$ and the upper border is given by the time
		horizon $T=L+3$. The source $s$ is connected to $x_0$ with an edge
		of low capacity $\frac{1}{L+1}$, so that the single flow unit that
		can enter this edge in $[0,T-2]$ can pass $(x_k,t)$ during one time unit.
		
		Since a solution to the partition problem is equivalent to a path of
		transit time $L$ through the $x_i$, we additionally provide paths of
		transit time $0,1,\dots,L-1$ bypassing the $b_i$ edges via the $y_i$ so
		that a solution for partition exists, if and only if flow of value
		$1$ can reach $t$.
		To provide the bypass paths, we set $\ell\in\mathbb{N}_0$ so that $L=2^{\ell+1}+r,\ r\in\mathbb{N}_0,r<2^{\ell+1}$ and define vertices 
		$y_i, i\in\mathbb{N}_0, i\leq\ell$. They create a path of transit time
		$L-1$ where the edges' transit times are powers of two and one edge of
		transit time $r$ and all edges can be bypassed by an edge with transit time zero.
		This allows all integer transit times smaller than $L-1$. All edges
		except for $(s,x_0)$ have unit capacity when they are active.
		
		Given a solution $S'$ to the partition problem, we can route flow leaving $s$ during $[0,1]$ through the $x_i$ along the non zero transit time edges if and only if the corresponding $b_i$ is in $S'$. Flow leaving $s$ in $[1,L+1]$ can trivially reach $x_k$ during $[L+1,L+2]$ using the bypass paths, providing a maximum flow of $1$.
		
		Only one unit of flow can reach $x_0$ until $L+2$, considering the time horizon $T=L+3$ and the transit time of $(x_k,t)$, the flow can have value at most $1$. Given a flow that sends one unit of flow to $t$, we can see that the flow has to route all flow that can pass $(s,x_0)$ during $[0,L+1]$ to $t$. Due to the integrality of  transit times, the flow leaving $s$ during $[0,1]$ has to take exactly time $L$ to traverse from $x_0$ to $x_k$. The bypass paths via $y_0$ are too short for this. As such, this time is the sum of edge transit times taken from the partition instance and zeroes from bypass edges, and there exists a solution $S'$ to the partition problem that consists of the elements corresponding to the non zero transit time edges taken by this flow.
		\qed
	\end{proof}
	
	\begin{figure}
		\centering
		\includegraphics{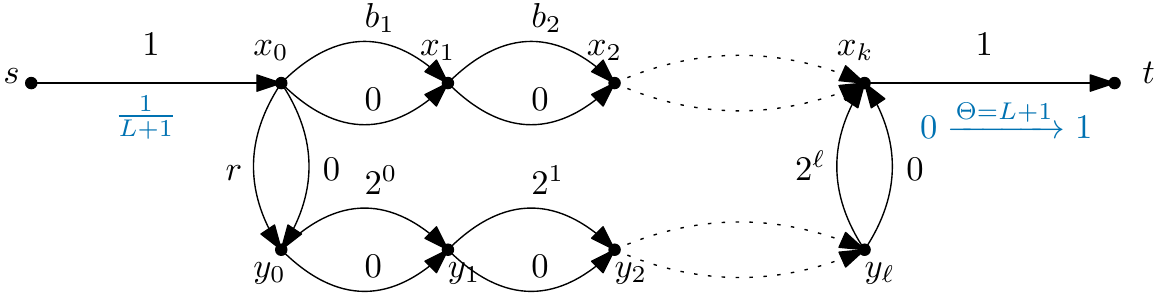}
		\caption{Graph constructed for the reduction of the partition problem to dynamic flow with time-dependent capacities. Flow leaving $s$ at time zero can only reach $t$ if it takes exactly time $L$ to traverse from $x_0$ to $x_k$, such choosing a partition. Black numbers are transit times, blue numbers indicate capacity, all unspecified capacities are $1$, time horizon is $T=L+3$.}
		\label{fig:NP_reduction_timeDependentCapacitiesSingleChange}
	\end{figure}
	
	\begin{restatable}{corollary}{weakNpHardnessCapacityInfTime}
		The dynamic flow problem with time-dependent capacities and infinite considered time is weakly
		NP-hard, even for acyclic graphs with only two capacity changes.
	\end{restatable}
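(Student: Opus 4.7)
The plan is to lift the reduction from the partition problem used in Theorem~\ref{trm:weak_NP_hardness} to the infinite-time regime. In the finite-time proof the time horizon $T = L+3$ played two intertwined roles: together with the single capacity change on $(x_k,t)$ at time $L+1$ it pinned down the useful window $[L+1, L+2]$ during which flow may enter $(x_k,t)$, and it bounded the sending interval on $(s,x_0)$. In infinite time we lose both effects, so the natural idea is to spend the second permitted capacity change on restoring the window.

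Concretely, I would re-use the graph from Theorem~\ref{trm:weak_NP_hardness} -- the chain of $x_i$-vertices with the $b_i$-edges, the $y_j$-bypass gadget providing all integer transits in $\{0, 1, \dots, L-1\}$, and the source edge $(s,x_0)$ with capacity $\tfrac{1}{L+1}$ -- and modify only the capacity of $(x_k,t)$ so that $u_{(x_k,t)}(\Theta) = 1$ on $[L+1, L+2]$ and $0$ elsewhere. This yields exactly two capacity changes, both on that single edge. Since $\int u_{(x_k,t)}(\Theta)\,\dif\Theta = 1$, every dynamic flow has value at most $1$, and the easy direction of the reduction (a partition solution yields a flow of value $1$) is identical to the corresponding argument in Theorem~\ref{trm:weak_NP_hardness}: the same $L+1$ length-one chunks on $(s,x_0)$ with transits $\{0, 1, \dots, L\}$ fully saturate the window.

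For the hard direction I would adapt the chunk argument of Theorem~\ref{trm:weak_NP_hardness}. Any flow of value $1$ must saturate $(x_k,t)$ throughout $[L+1, L+2]$, and the piecewise-constant structure of the flow together with the integrality of the transit times forces $L+1$ distinct integer transits from $x_0$ to $x_k$, one per length-one sending chunk on $(s,x_0)$. The aim is to conclude that one of these transits must equal $L$, i.e.\ correspond to a subset of $S$ summing to $L$. I expect the main obstacle of the plan to be exactly here: in infinite time the sending chunks are no longer confined to $[0, L+1]$, so a chunk shifted to a negative starting time could a priori realise a subset-sum transit strictly greater than $L$ in place of the transit-$L$ chunk used in Theorem~\ref{trm:weak_NP_hardness}. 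The technical step required to complete the argument is to show that any such shifted flow of value $1$ can be transformed into one whose chunk transits all lie in $\{0, 1, \dots, L\}$, using the combinatorics of achievable subset sums together with the unit capacity constraints on the $b_i$-edges, so that a partition solution is forced in the original instance.
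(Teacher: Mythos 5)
You follow the same route as the paper — reuse the network of Theorem~\ref{trm:weak_NP_hardness} and spend the second capacity change on closing the window $[L+1,L+2]$ of $(x_k,t)$ — but the obstacle you flag in the hard direction is a genuine gap, and the completion you propose (transforming a shifted flow into one whose chunk transits lie in $\{0,\dots,L\}$) cannot be carried out for this unmodified construction. With infinite considered time nothing prevents flow from entering $(s,x_0)$ at negative times, and the $x_i$-chain always offers a path of transit time $\sum_i b_i = 2L$ (take every $b_i$-edge). So, irrespective of whether the partition instance is solvable, one can send $\tfrac{1}{L+1}$ during each of the $L$ departure slots that reach the window via the bypass transits $0,1,\dots,L-1$, plus one further chunk of $\tfrac{1}{L+1}$ departing roughly $2L$ time units \emph{before} the window and routed along the full chain; these $L+1$ cohorts saturate $(x_k,t)$ throughout $[L+1,L+2]$, respect all capacities, and give value $1$ in every instance. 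Hence the shifted flow is itself a legitimate maximum flow, no transformation argument can force a transit-$L$ chunk, and the equivalence ``value $1$ iff partition solvable'' fails. (The paper's own proof of this corollary is the same terse reuse of the construction and does not address this point either, so the issue you noticed really does need an explicit fix rather than a patching lemma.)

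The natural repair is to make \emph{every} integer transit time in a bounded range necessary, exactly as in the transit-time reduction of Theorem~\ref{trm:weak_NP_hardness_infTime_timeDependentTransitTimes}: extend the bypass gadget so that it supplies all transits in $\{0,\dots,L-1\}\cup\{L+1,\dots,2L\}$ and lower the bottleneck capacity to $u_{(s,x_0)}=\tfrac{1}{2L+1}$, keeping the unit-capacity window $[L+1,L+2]$ on $(x_k,t)$ as the only two capacity changes. Since the outflow rate of $x_0$ is bounded by $\tfrac{1}{2L+1}$ at every instant and, for a fixed arrival instant in the window, distinct integer transits correspond to distinct departure instants, saturating the window at rate $1$ requires $2L+1$ distinct integer transit times from $x_0$ to $x_k$; all such transits lie in $\{0,\dots,2L\}$, so every value must be realized, in particular $L$, which exists exactly when the partition instance has a solution, and early departures no longer help because transit times are bounded above by $2L$. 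With that modification your outline becomes a correct proof.
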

	\begin{proof}
		In the proof of Theorem~\ref{trm:weak_NP_hardness}, we restricted
		the flow on the edge from $x_k$ to $t$ to have non-zero capacity
		only at time $[L + 1, L + 2]$.  For
		Theorem~\ref{trm:weak_NP_hardness}, we achieved the lower bound with
		one capacity change and the upper bound with the time horizon.
		Here, we can use the same construction but use a second capacity
		change for the upper bound.
		\qed
	\end{proof}
	
	For the case of time-dependent transit times, we use a similar
	reduction.  We start with the case of infinite considered time.
	
	\begin{restatable}{theorem}{weakNPHardnessInfTimeTimeDependentTransitTimes}
		\label{trm:weak_NP_hardness_infTime_timeDependentTransitTimes}
		The dynamic flow problem with infinite considered time and
		time-dependent transit times is weakly NP-hard, even for acyclic graphs with only one
		transit time change.
	\end{restatable}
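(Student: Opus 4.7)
The plan is to mirror the reduction from Theorem~\ref{trm:weak_NP_hardness}, keeping the partition-encoding gadget (source $s$, chain $x_0,\dots,x_k$ with pairs of $b_i$-transit and $0$-transit edges, plus the $y_i$-bypass providing all integer traversal times in $\{0,1,\dots,L-1\}$ from $x_0$ to $x_k$, and bottleneck edge $(s,x_0)$ of constant capacity $\tfrac{1}{L+1}$), but replacing the capacity change (and the implicit role played by the time horizon $T=L+3$) with a single transit-time change on the outgoing edge $(x_k,t)$. This edge is given constant capacity $1$ and a transit time that jumps at one carefully chosen moment $\Theta_0$, so that the corresponding ``useful'' arrival window at $t$ again has length exactly $1$.

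Concretely, I would then verify the two directions of the reduction in the infinite-time model. Given a solution $S'$ of the partition instance, one constructs a flow that streams $\tfrac{1}{L+1}$ units per time unit into $(s,x_0)$ over an interval of length $L+1$, routes the flow entering in the critical sub-interval through the $b_i$-edges corresponding to $S'$ (giving chain transit exactly $L$), and routes the rest along suitable $y_i$-bypasses of shorter transit; this achieves flow value $1$ across the useful arrival window at $t$. Conversely, any flow of value at least $1$ must use the entire $(s,x_0)$-capacity during that window and, by the integrality of transit times both in the chain and on $(x_k,t)$, the flow entering $s$ in the critical sub-interval has to take exactly $L$ time units from $x_0$ to $x_k$, which as in Theorem~\ref{trm:weak_NP_hardness} selects a partition-encoding subset of the $b_i$.

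The main obstacle is that in the previous proof two separate mechanisms close off the useful arrival window: the capacity change supplies its lower bound and the time horizon supplies its upper bound. Here both boundaries must emerge from a single transit-time jump. The key observation is that a transit-time change creates either an arrival gap or an arrival overlap at the head of the edge; by tuning the magnitude of the jump on $(x_k,t)$ one can force essentially all useful arrivals at $t$ into a window of the desired length $1$, so that the edge behaves as if its entry capacity were open only in that window. The most delicate step will be arguing that the infinite-time setting does not allow flow to ``cheat'' by entering $(x_k,t)$ arbitrarily early or late through the slow channel: invoking the characterisation of maximum flows with infinite considered time as eventually constant and the steady-state rate being identical in YES- and NO-instances, one can show that only flow routed through the gap/overlap window contributes to distinguishing the two cases, yielding the reduction.
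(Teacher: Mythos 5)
There is a genuine gap, and it is exactly at the point you flag as ``the most delicate step.'' A change in transit time never blocks or destroys flow: an increase only creates a gap in arrivals at the head of the edge, a decrease only an overlap, and every unit of flow that enters the edge still reaches its head eventually. Since the head of your modified edge $(x_k,t)$ is the sink and the considered time is infinite, there is no deadline and no downstream capacity that late or bunched arrivals at $t$ could violate, so the jump on $(x_k,t)$ cannot make the edge ``behave as if its entry capacity were open only in that window'' -- its constant capacity $1$ is usable at all times. In your construction the only bottleneck is the constant-capacity edge $(s,x_0)$ with capacity $\tfrac{1}{L+1}$, and all flow can simply take zero-transit bypasses and enter $(x_k,t)$ whenever it likes; hence the maximum flow (on every sufficiently large window) is the same for YES- and NO-instances of partition, and the reduction does not distinguish them. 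There is also no ``gap/overlap window'' to reason about in your instance, because the change sits downstream of the bottleneck and produces no surplus flow anywhere.

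The paper resolves this differently: the single transit-time change is a \emph{decrease} (from $1$ to $0$ at time $0$) on the \emph{source} edge $(s,x_0)$, which makes two units of flow arrive at $x_0$ during a unit window -- the overlap creates a transient surplus upstream of the capacity structure. A direct edge $(x_0,t)$ of capacity $1$ absorbs the steady stream, while the surplus unit must leave via $(x_k,t)$, which has small \emph{constant} capacity $\tfrac{1}{2L+1}$; to squeeze a whole unit through it, the surplus must be dispersed over $2L+1$ distinct integer arrival times at $x_k$, i.e.\ one path of every integer transit time $0,\dots,2L$ must be used. The bypasses supply all of these except $L$, and a path of transit time exactly $L$ exists iff the partition instance is solvable. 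So the hardness comes from forcing temporal dispersion of an overlap-induced spike through a small constant capacity, not from closing an arrival window at $t$; your approach would need to be restructured along these lines (or use a finite horizon, where a large transit-time increase can act as a deadline, as the paper does only in the finite-time corollary).
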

	\begin{proof}
		Similar to the proof of Theorem~\ref{trm:weak_NP_hardness} we give a reduction of the partition problem. The constructed graph can be seen in Figure~\ref{fig:NP_reduction_timeDependentTransitTimeSingleChangeInfiniteTime}. We want to link the existence of a transit time $L$ path to a maximum flow sending $1$ flow per time from $s$.
		For this, we start the graph  with an edge $(s,x_0)$ whose transit time gets reduced from $1$ to zero at time $\Theta=0$. This results in $2$ units of flow reaching $x_0$ at time $\Theta=0$, while only a flow of $1$ can traverse $(x_0,t)$. This means that flow of $1$ has to pass through the $x_i$ and $y_i$.
		The paths through the $x_i$ and the bypass paths through the $y_i$ function like in the proof of Theorem~\ref{trm:weak_NP_hardness}, but here the bypass edges also provide paths of transit times $L+1$ to $2L$.
		Because the edge $(x_k,t)$ has capacity $u_{(x_k,t)}=\frac{1}{2L+1}$, the flow routed through $x_k$ has to arrive at $x_k$ using at least $2L+1$ paths with different transit times.
		The paths from $x_0$ to $x_k$ have integer transit times between zero and $2L$, so to route the extra unit of flow arriving at $x_0$ at time $\Theta=0$, flow needs to be routed through one path of each integer transit time between $0$ and $2L$. The bypass paths do not offer a path of transit time $L$, so, like in the proof of Theorem~\ref{trm:weak_NP_hardness}, this flow unit can be completely routed through the network if and only if the partition problem has a solution. 
		\qed
	\end{proof}

	\begin{figure}
		\centering
		\includegraphics{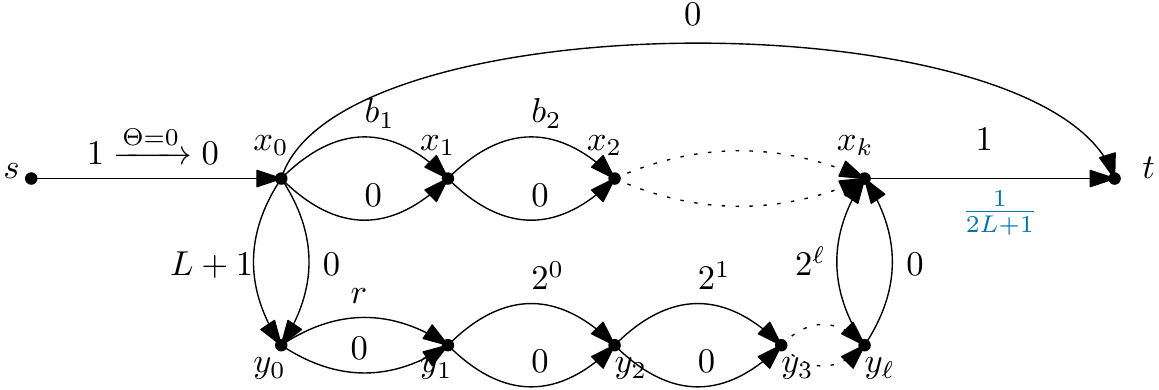}
		\caption{Graph constructed for the reduction of the partition problem to dynamic flow with infinite considered time and time-dependent transit times. The flow units leaving $s$ at times $-1$ and zero can only reach $t$ if they take the direct $(x_0,t)$ edge and $2L+1$ paths with different transit times to $x_k$.
		Black numbers are transit times, blue numbers specify capacity, all unspecified capacities are $1$.}
		\label{fig:NP_reduction_timeDependentTransitTimeSingleChangeInfiniteTime}
	\end{figure}
	
	To translate this result to the case of a finite time horizon, note
	that we can use the above construction and choose the time horizon
	sufficiently large to obtain the following corollary.
	
	\begin{restatable}{corollary}{weakNPHardnessTransitTimeFiniteTime}
		The dynamic flow problem with time-dependent transit times is weakly NP-hard, even for acyclic graphs with only one transit time change.
	\end{restatable}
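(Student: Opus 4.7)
The plan is to reuse the graph constructed in the proof of Theorem~\ref{trm:weak_NP_hardness_infTime_timeDependentTransitTimes} without modification and enclose the relevant activity inside a finite window by choosing a sufficiently large time horizon $T$. First I would translate the time axis so that the single transit-time change on $(s,x_0)$ happens at an interior time $t_0\in(0,T)$, with both $t_0$ and $T-t_0$ exceeding the longest $s$-$t$ path length (which is $O(L)$). This guarantees that every flow-carrying path lies entirely inside $[0,T]$ and that a temporally repeated flow along the direct route $s\to x_0\to t$ can run for its full duration both before and after the change.

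Second, I would specify the target flow value $V^{*}$. A temporally repeated flow of rate $1$ along $s\to x_0\to t$ achieves some value $V_0$ that is a function of $T$ and the relevant transit times, while the transit-time change injects one additional unit of flow at $x_0$ at time $t_0$. Setting $V^{*}=V_0+1$, I would show that $G$ admits a flow of value $V^{*}$ if and only if the partition instance is solvable, by invoking exactly the argument in the proof of Theorem~\ref{trm:weak_NP_hardness_infTime_timeDependentTransitTimes}: the extra unit must be routed through $2L+1$ paths of pairwise distinct integer transit times between $0$ and $2L$ from $x_0$ to $x_k$, and in particular through a length-$L$ path, which exists precisely when the partition problem has a solution.

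The main obstacle I anticipate is ensuring that the choice of $T$ does not weaken the hardness argument through boundary effects, i.e.\ that neither endpoint of $[0,T]$ lets the extra unit of flow be absorbed or routed in a way the infinite-time analysis forbids. Padding $T$ so that both $t_0$ and $T-t_0$ exceed, say, $4L$ suffices: flow entering $s$ close to time $0$ or time $T$ that would not arrive at $t$ in time simply cannot contribute to $V_0$, and the extra unit associated with the transit-time change has ample room on both sides of $t_0$ to disperse across the $x_i$ and $y_i$ paths as required. Since $T$ is then polynomial in $L$ (hence in the encoding of the partition instance) and the construction is otherwise identical to the infinite-time case, the reduction is polynomial and weak NP-hardness follows.
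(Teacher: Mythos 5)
Your proposal is correct and takes essentially the same route as the paper's own proof: reuse the construction of Theorem~\ref{trm:weak_NP_hardness_infTime_timeDependentTransitTimes} unchanged, shift the single transit-time change of $(s,x_0)$ to an interior point of a finite window, and fix a target flow value that is attainable exactly when the partition instance is solvable, with the equivalence carried by the same dispersal argument over the $2L+1$ distinct path lengths. The paper merely picks the tight concrete parameters (change at $\Theta=1$, $T=2L+2$, target value $2L+2$) rather than a generously padded horizon, which spares it the boundary-effect discussion and makes the target value explicit where your $V_0$ is left implicit.
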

	\begin{proof}
		Using the construction of the proof for Theorem~\ref{trm:weak_NP_hardness_infTime_timeDependentTransitTimes}, we can restrict time to the interval $[-1,2L+1]$, then a solution to the partition problem is equivalent to the existence of a flow of value $2L+2$. To get a considered time interval from zero to a time horizon, we let the transit time change of $(s,x_0)$ occur at time $\Theta=1$ instead and set $T=2L+2$.
		\qed
	\end{proof}
	
	This leaves one remaining case: infinite considered time and a single capacity change.
	For this case, we can show that there always exists a minimum dynamic cut, where each vertex changes partition at most once and all partition changes are of the same direction. Furthermore, for given partitions before and after the changes, a linear program can be used to find the optimal transition as long as no vertex changes partition more than once and all partition changes are of the same direction.
	This motivates the following conjecture.
	
	\begin{conjecture}
		\label{cjt:infTimeSingleCapacityChange_poly}
		The minimum cut problem in a dynamic flow network with only a single change in capacity and infinite considered time can be solved in polynomial time.
	\end{conjecture}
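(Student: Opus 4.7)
The plan is to build on the two ingredients already identified in the paragraph preceding the conjecture: a structural lemma asserting the existence of an optimal minimum cut in which every vertex changes partition at most once and all changes are in the same direction, and an LP that computes the optimal change times once the ``initial'' and ``final'' partitions $S_-$ and $S_+$ are fixed. The task is then to avoid the naive enumeration over exponentially many choices of $(S_-, S_+)$.

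First, I would fix a direction of change (say $\bar S_v \xrightarrow{\Theta} S_v$ for every vertex that changes; the other direction is handled symmetrically) and parameterize the cut by a vector of change times $(t_v)_{v \in V}$ taking values in $\mathbb R \cup \{-\infty, +\infty\}$, with $t_s = -\infty$ and $t_t = +\infty$. An edge $e = (v, w)$ then contributes to the cut precisely when $\Theta \in [t_v, t_w - \tau_e)$, so the total capacity decomposes as a sum of integrals $\int_{t_v}^{t_w - \tau_e} u_e(\Theta)\,\mathrm d\Theta$ that are piecewise linear in the endpoints with only one breakpoint each, namely the one caused by the single capacity change on the unique affected edge $e^{*}$ at time $\Theta^{*}$.

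Second, I would exploit the static structure at $\pm\infty$. Outside a bounded transient window around $\Theta^{*}$, the dynamic cut must coincide with a static minimum $s$-$t$ cut of the pre-change network (for $\Theta \ll \Theta^{*}$) and of the post-change network (for $\Theta \gg \Theta^{*}$). Using the classical result that the minimum $s$-$t$ cuts of a static network form a distributive lattice with unique source-minimal and source-maximal representatives, I would aim to show that one may always choose $S_-$ and $S_+$ as extremal representatives of the respective lattices, with the extremal choice dictated by the direction of the changes. This would leave only $O(1)$ candidate pairs $(S_-, S_+)$, and the LP of the second ingredient then finishes the job in polynomial time.

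The main obstacle is precisely this lattice exchange argument: while it is intuitive that one should ``shrink the source side as much as possible before the change and grow it as much as possible after'', the transition cost couples $S_-$, $S_+$, and the change times through the transit times $\tau_e$, so a non-extremal choice of $S_-$ could a priori save on transition cost while paying only the same static cost. The delicate part will be to combine submodularity of static min-cut with a careful accounting of the integrals in a window of width $\max_e \tau_e$ around $\Theta^{*}$ to rule out such pathologies. If that fails, I would instead try to recast the joint optimization over $(S_-, S_+, (t_v))$ as a single parametric min-cut in a polynomial-size time-expanded auxiliary graph whose edges encode both the static capacities and the breakpoint-linear transition costs, reducing the whole problem to one classical min-cut computation.
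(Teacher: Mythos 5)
The statement you are attacking is stated in the paper only as a \emph{conjecture}: the authors offer no proof, only the two ingredients you start from (existence of a minimum cut in which every vertex changes at most once and all changes share a direction, plus an LP for the transition once the bracketing partitions are fixed). Your plan correctly isolates the missing piece, namely the selection of the pair $(S_-,S_+)$ from the exponentially many static minimum cuts of the pre- and post-change networks, but the step you propose to close it --- that extremal (source-minimal or source-maximal) representatives of the two min-cut lattices are always optimal --- is exactly the part that neither you nor the paper proves, and you concede as much. Two concrete difficulties beyond the one you name: (i) the extremal representatives of the two lattices need not be nested, yet same-direction changes force $S_-\subseteq S_+$ (or the reverse), so the $O(1)$ candidate pairs you would hand to the LP may simply be infeasible as endpoints of a monotone transition; (ii) the transition cost is not a sum of terms that are linear in the change times: each edge contributes $\bigl[U_e(t_w-\tau_e)-U_e(t_v)\bigr]^{+}$, where $U_e$ is the piecewise-linear antiderivative of $u_e$, and the positive part together with the breakpoint of $U_{e^*}$ makes the joint objective over $(S_-,S_+,(t_v)_v)$ non-convex in general --- which is precisely why the paper's LP is only claimed to work \emph{after} the partitions (and hence the set of ever-contributing edges) are fixed. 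Your submodularity/exchange idea is the right kind of tool, but as written it is a research programme, not a proof.

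Your fallback --- recasting everything as one parametric min-cut on a time-expanded auxiliary graph --- does not obviously rescue the argument: the change times are continuous and the natural time expansion is only pseudo-polynomial in the transit times and in $\Theta^{*}$, so you would first need a discretization lemma (for instance, that optimal change times lie in a polynomially bounded candidate set generated by $\Theta^{*}$ and signed sums of transit times) before this yields a polynomial-time algorithm. In short, the proposal is a sensible and honest attack on what is an open conjecture in the paper, but its central lemma is missing, and the statement remains unproved.
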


	\section{The Complexity of Maximum Flows and Minimum Cuts}
	\label{sec:compl-maxim-flows}
	
	We first construct a dynamic flow network such that
	all maximum flows and minimum cuts have exponential complexity.
	Afterwards, we show that there are also instances that require
	exponentially complex flows but allow for cuts of linear size and vice
	versa.
	These results are initially proven for a single change in capacity and are then shown to also hold in the setting with time dependent transit times, likewise with only one change in transit time required.
	
	\subsection{Exponentially Complex Flows and Cuts}
	\label{subsec:exponentially_complex_flows_and_cuts}
	
	We initially focus on the complexity of cuts and only later show that
	it transfers to flows.  Before we start the construction, note that
	the example in Figure~\ref{fig:forcedChangeBasicIdea} shows how the
	partition change of two vertices $a$ and $b$ can force a single vertex
	$v$ to change its partition back and forth.  This type of enforced
	partition change of $v$ is at the core of our construction.
	
	\begin{figure}
		\centering
		\includegraphics{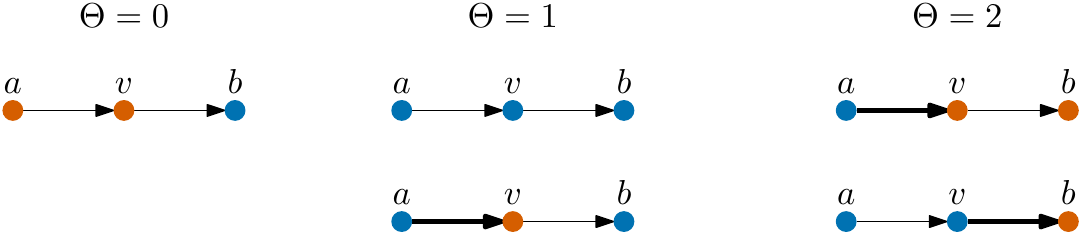}
		\caption{Example where $a$ changes from $\bar{S}$ (red) to $S$
			(blue) at time $1$ and $b$ changes from $S$ to $\bar{S}$ at time
			$2$.  Only edges from $S$ to $\bar{S}$ contribute to the cut (bold
			edges).  Assuming $v$ starts in $\bar{S}$ and
			$u_{(a, v)} < u_{(v, b)}$ as well as $\tau_{(a,v)}=\tau_{(v,b)}=0$, $v$ has to change to $S$ at time $1$
			and back to $\bar{S}$ at time $2$ in a minimum cut (top row).  The
			bottom row illustrates the alternative (more expensive) behavior
			of $v$.}
		\label{fig:forcedChangeBasicIdea}
	\end{figure}
	
	More specifically, we first give a structure with which we
	can force vertices to mimic the partition changes of other vertices,
	potentially with fixed time delay. 
	
	The \emph{mimicking gadget} links two non terminal vertices
	$a,b\in V\setminus\{s,t\}$ using edges $(a,b),(b,t)\in E$ with
	capacities $u_{(a,b)}=\alpha,u_{(b,t)}=\beta$.  The following lemma
	shows what properties $\alpha$ and $\beta$ need to have such that the
	mimicking gadget does its name credit, i.e., that $b$ mimics $a$ with
	delay $\tau_{(a,b)}$.  A visualization of the mimicking gadget is
	shown in Figure~\ref{fig:partitionMimickingGadget}.
	
	\begin{figure}
		\centering
		\includegraphics{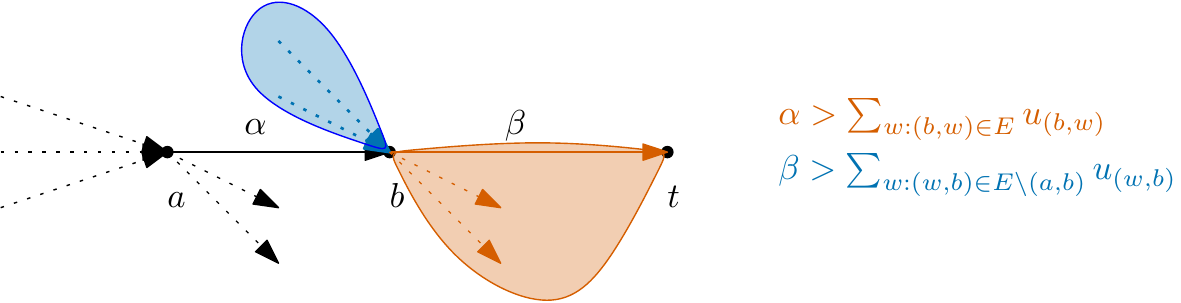}
		\caption{Gadget linking the partitions of two vertices $a$ and $b$,
			so that $b$ mimics $a$ with a delay of $\tau_{(a,b)}$;
			$\alpha,\beta$ are capacities.}
		\label{fig:partitionMimickingGadget}
	\end{figure}
	
	\begin{lemma}
		\label{lem:mimicking_gadget}
		Let $G$ be a graph that contains the mimicking gadget as a
		sub-graph, such that
		\begin{equation*}
		\alpha > \sum_{w\mid(b,w)\in E}u_{(b,w)} \quad \text{and} \quad
		\beta > \sum_{w\mid(w,b)\in E\setminus(a,b)}u_{(w,b)}.
		\end{equation*}
		Then, $S_b(\Theta)=S_a(\Theta-\tau_{(a,b)})$ for every minimum cut
		$S$ and times $\Theta\in(\tau_{(a,b)},T-\tau_{(b,t)})$.
	\end{lemma}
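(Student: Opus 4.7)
The plan is a standard exchange argument: assume $S$ is a minimum cut, construct a candidate $S'$ in which $b$ exactly mimics $a$, and show that $\capacity(S') \le \capacity(S)$ with strict inequality whenever they differ. Concretely, I would define $S'_v=S_v$ for every $v\neq b$ and $S'_b(\Theta)=S_a(\Theta-\tau_{(a,b)})$ for $\Theta\in(\tau_{(a,b)},T-\tau_{(b,t)})$, leaving $S'_b=S_b$ elsewhere. Suppose the conclusion fails: then the set on which $S_b(\Theta)\neq S_a(\Theta-\tau_{(a,b)})$ has positive measure, and because we may assume all cuts are piecewise constant with non-zero-measure intervals, this set decomposes into finitely many intervals $I$ on each of which both $S_b$ and the delayed $S_a$ are constant (and the finitely many further time-shifted $S_w$ relevant at $b$ as well).

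On such an interval $I$, only edges incident to $b$ see their contribution to the cut changed, so I would reduce to a local calculation. Split into two cases. In case (i) we have $S_b=0$ and $S_a(\cdot-\tau_{(a,b)})=1$ on $I$; flipping $b$ into $S$ on $I$ removes the contribution of $(a,b)$ over an interval of length $|I|$, saving at least $\alpha|I|$, while at most it creates new contributions on the outgoing edges $(b,w)$, which total at most $\bigl(\sum_{w\mid(b,w)\in E}u_{(b,w)}\bigr)|I|$; the remaining incoming edges $(w,b)$ with $w\neq a$ can only yield additional savings. The hypothesis $\alpha>\sum_{w\mid(b,w)\in E}u_{(b,w)}$ then gives a strictly negative net change. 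In case (ii) we have $S_b=1$ and $S_a(\cdot-\tau_{(a,b)})=0$ on $I$; flipping $b$ out of $S$ saves at least the contribution $\beta|I|$ of $(b,t)$, while adding at most $\bigl(\sum_{w\mid(w,b)\in E\setminus(a,b)}u_{(w,b)}\bigr)|I|$ from newly contributing incoming edges. Edge $(a,b)$ cannot newly contribute because $S_a=0$ on the relevant times, and the remaining outgoing edges only save. The hypothesis on $\beta$ closes the case.

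Summing these strictly negative contributions across all intervals $I$ yields $\capacity(S')<\capacity(S)$, contradicting minimality of $S$. The restriction $\Theta\in(\tau_{(a,b)},T-\tau_{(b,t)})$ is exactly what makes the local argument valid: it guarantees that the pre-shift time $\Theta-\tau_{(a,b)}$ on which $(a,b)$ would contribute lies in $[0,T]$ in case (i), and that $\Theta+\tau_{(b,t)}<T$ so that $t$ is still in $\bar S$ by our convention, ensuring $(b,t)$ actually contributes $\beta$ in case (ii).

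The main obstacle is the bookkeeping of time shifts: for each edge at $b$ one must carefully align the integration variable in $\capacity(\cdot)$ with the ``flipped'' interval for $S_b$, so that in each case the guaranteed saving comes from the right edge ($(a,b)$ or $(b,t)$) and the potential additions are exactly the sums appearing on the right-hand sides of the two capacity inequalities. Once this alignment is set up, both cases reduce mechanically to the assumed strict inequalities.
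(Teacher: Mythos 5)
Your proposal is correct and follows essentially the same route as the paper's proof: a local comparison of the cut contributions of the edges incident to $b$ under its two possible assignments, using $\alpha$ (via $(a,b)$) to force $b$ into $S$ and $\beta$ (via $(b,t)$) to force it out, with minimality of $S$ supplying the contradiction. You merely make the exchange step explicit by constructing the modified cut $S'$ interval by interval, whereas the paper argues pointwise; the content is identical.
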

	\begin{proof}
		We first show $a\in S(\Theta-\tau_{(a,b)})\implies b\in
		S(\Theta)$.
		With the partition of $a$ fixed,
		we look at
		possible contribution to $S$ of edges incident to $b$ at time $\Theta$. For $b\in\bar{S}(\Theta)$ the contribution is
		at least $\alpha$, because $\Theta\in(\tau_{(a,b)},T-\tau_{(b,t)})$
		ensures that $(a,b)$ can contribute to $S$.
		For $b\in S(\Theta)$ the contribution is at most $\sum_{w\mid(b,w)\in E}u_{(b,w)}<\alpha$. Because $S$ is a minimum cut, we obtain $b\in S(\Theta)$.
		The other direction
		$a\in\bar{S}(\Theta-\tau_{(a,b)})\implies b\in\bar{S}(\Theta)$ holds
		for similar reasons. For $b\in S(\Theta)$ the contribution is at
		least $\beta$. For $b\in\bar{S}(\Theta)$ the contribution is at most
		$\sum_{w\mid(w,b)\in E}u_{(w,b)}<\beta$.
		\qed
	\end{proof}
	
	Note that Lemma~\ref{lem:mimicking_gadget} does not restrict the edges
	incident to $a$.  Thus, we can use it rather flexibly to transfer
	partition changes from one vertex to another.
	
	To enforce exponentially many partition changes, we next give a gadget
	that can double the number of partition changes of one vertex.  To
	this end, we assume that, for every integer $i\in[k]$,
	we already have access to vertices $a_i$ with \emph{period}
	$p_i \coloneqq 2^i$, i.e., $a_i$ changes partition every $p_i$ units of time.
	Note that $a_1$ is the vertex with the most changes.  With this, we
	construct the so-called binary counting gadget that produces a vertex
	$v$ with period $p_0 = 1$, which results in it having twice as many changes as
	$a_1$.  Roughly speaking, the binary counting gadget, shown in
	Figure~\ref{fig:binaryCountingGadget}, consists of the above mentioned
	vertices $a_i$ together with additional vertices $b_i$ such that $b_i$
	mimics $a_i$.  Between the $a_i$ and $b_i$ lies the central vertex $v$
	with edges from the vertices $a_i$ and edges to the $b_i$.
	Carefully chosen capacities and synchronization between the $a_i$ and $b_i$ results in $v$ changing partition every step.
	
	To iterate this process using $v$ as vertex for the binary counting
	gadget of the next level, we need to ensure functionality with the
	additionally attached edges of the mimicking gadget.
	
	The \emph{binary counting gadget} $H_k$ shown in
	Figure~\ref{fig:binaryCountingGadget} is formally defined as follows.
	It contains the above mentioned vertices $a_i,b_i$ for $i\in[k]$ and the vertex $v$.  Additionally, it contains
	the source $s$ and target $t$.  On this vertex set, we have five types
	of edges.  All of them have transit time $1$ unless explicitly specified
	otherwise.  The first two types are the edges $(a_i, b_i)$ and
	$(b_i, t)$ for $i \in [k]$, which form a mimicking gadget.  We set
	$\tau_{(a_i, b_i)} = p_i + 1$ which makes $b_i$ mimic the changes of
	$a_i$ with delay $p_i + 1$.  Moreover, we set
	$u_{(a_i, b_i)} = \alpha_i \coloneqq 2^{i-1} + 2 \eps$ and
	$u_{(b_i, t)} = \beta_i \coloneqq 2^{i - 1} + \eps$.  We will see that these
	$\alpha_i$ and $\beta_i$ satisfy the requirements of the mimicking
	gadget in Lemma~\ref{lem:mimicking_gadget}.  The third and fourth
	types of edges are $(a_i, v)$ and $(v, b_i)$ for $i \in [k]$ with
	capacities $u_{(a_i, v)} = u_{(v, b_i)} = 2^{i - 1}$.  These edges
	have the purpose to force the partition changes of $v$, similar to the
	simple example in Figure~\ref{fig:forcedChangeBasicIdea}.  Finally, we
	have the edge $(s, v)$ with capacity $u_{(s, v)} = 1 - \eps$.  It has
	the purpose to fix the initial partition of $v$ and introduce some
	asymmetry to ensure functionality even if additional edges are
	attached to $v$.
	
	Our plan is to prove that the binary counting gadget $H_k$ works as
	desired by induction over $k$.  We start by defining the desired
	properties that will serve as induction hypothesis.
	
	\begin{definition}
		Let $G$ be a graph.  We say that $H_k$ is a \emph{valid binary counting
			gadget} in $G$ if $H_k$ is a subgraph of $G$ and every minimum cut
		$S$ has the following properties.
		\begin{itemize}
			\item For $i\in[k-1]$, the vertex $a_i$ has period $p_i$.  It changes
			its partition $2^{k - i}$ times starting with a change from $S$ to
			$\bar{S}$ at time $0$ and ending with a change at time $2^k-2^i$.
			\item For $i = k$, $a_k$ changes from $S$ to $\bar{S}$ at time $0$
			and additionally back to $S$ at time $2^k$.
		\end{itemize}
	\end{definition}
	
	\begin{figure}
		\centering
		\includegraphics{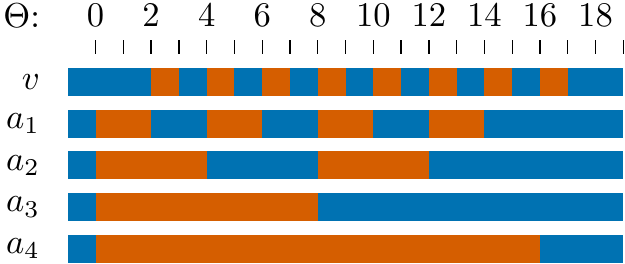}
		\caption{Visualization of the partition change patterns of a valid binary counting gadget $H_4$. In blue sections the vertex is in $S$ and in red sections it is in $\bar{S}$.}
		\label{fig:binaryCounting}
	\end{figure}
	
	Note that in a valid binary counting gadget the $a_i$ and $v$ form a binary counter from $0$ to $2^k-1$
	when regarding $\bar{S}$ as zero and $S$ as $1$, with $v$ being the
	least significant bit (shifted back two time steps); see
	Figure~\ref{fig:binaryCounting}.
	
	\begin{figure}
		\centering
		\begin{subfigure}{\textwidth}
			\centering
			\includegraphics{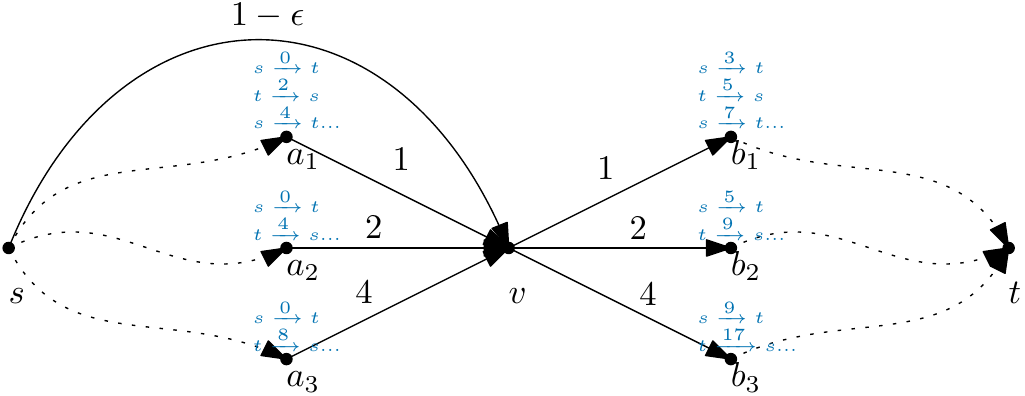}
			\caption{More intuitive visualization exemplary showing the partition changes of $a_i,b_i$ in blue, omitting the mimicking gadgets. For improved readability the partitions are denoted by their terminal, i.e. $s$ for $S$ and $t$ for $\bar{S}$.}
			\vspace{18pt}
		\end{subfigure}
		\begin{subfigure}{\textwidth}
			\centering
			\includegraphics{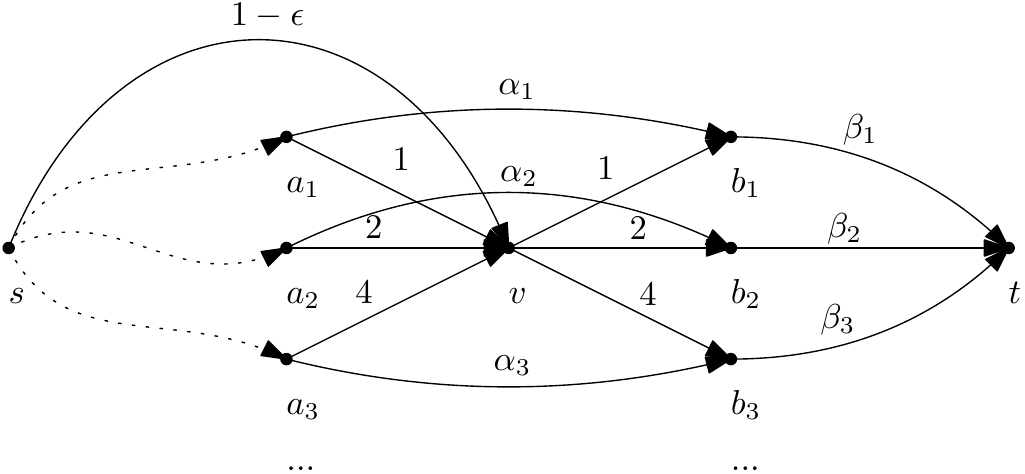}
			\caption{Full visualization with all edges, including the mimicking gadgets.}
		\end{subfigure}
		\caption{Visualization of the binary counting gadget allowing to double the number of partition changes of a single vertex, ensuring $2^k$ changes of vertex $v$ assuming that vertices $a_i,i<k$ are changing partition $\ch_{a_i}=2^{k-i}$ times each, with $\ch_{a_k}=2$ and correct timing; Black numbers are capacities.}
		\label{fig:binaryCountingGadget}%\label{fig:binaryCountingGadget_extended}
	\end{figure}
	
	\begin{restatable}{lemma}{lemBinaryCountingGadget}
		\label{lem:binary_counting_gadget}
		Let $G$ be a graph containing the valid binary counting gadget $H_k$ such
		that the central vertex $v$ has no additional incoming edges, the
		sum of the capacities of additional outgoing edges of $v$ is less than
		$1-\eps$, and no additional edges are incident to the $b_i$.  Then,
		for every minimum cut $S$, the central vertex $v$ has period
		$p_0 = 1$ and changes $2^k$ times, starting with a change from $S$
		to $\bar{S}$ at time $2$ and ending with a change at time $2^k+1$.
	\end{restatable}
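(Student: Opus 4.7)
The plan is to apply Lemma~\ref{lem:mimicking_gadget} to fix the behavior of each $b_i$ and then, at every time $\Theta'$, compare the cut contributions of edges incident to $v$ under $v \in S(\Theta')$ versus $v \in \bar{S}(\Theta')$. The power-of-two capacities will reduce this comparison to a simple binary-arithmetic identity.

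First I verify the preconditions of Lemma~\ref{lem:mimicking_gadget}: $\alpha_i = 2^{i-1} + 2\eps$ strictly exceeds $\beta_i = 2^{i-1} + \eps$, the only outgoing capacity of $b_i$, and $\beta_i$ strictly exceeds $u_{(v,b_i)} = 2^{i-1}$, which by hypothesis is the only incoming edge of $b_i$ besides $(a_i, b_i)$. Hence $S_{b_i}(\Theta) = S_{a_i}(\Theta - p_i - 1)$ in every minimum cut. Since every edge adjacent to $v$ has transit time $1$, the part of the cut capacity that depends on $v$'s state at time $\Theta'$ equals $(1 - \eps) + X(\Theta')$ if $v \in \bar{S}(\Theta')$ and $Y(\Theta') + D$ if $v \in S(\Theta')$, where $X(\Theta') \coloneqq \sum_{i=1}^{k} 2^{i-1}[a_i \in S(\Theta'-1)]$ and $Y(\Theta') \coloneqq \sum_{i=1}^{k} 2^{i-1}[a_i \in \bar{S}(\Theta'-p_i)]$ (translating $b_i \in \bar{S}(\Theta'+1)$ via mimicking), and $0 \le D < 1 - \eps$ bounds the contribution of the additional outgoing edges.

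The combinatorial core identifies $X$ and $Y$. For $\Theta' \in (j, j+1)$ with integer $j \in \{2, \dots, 2^k\}$, validity of $H_k$ makes $a_i \in S(\Theta)$ equivalent to bit $i$ of $\lfloor\Theta\rfloor$ being $1$ on $[0, 2^k)$. A short binary-arithmetic computation, using that bit $i$ of $j - 2^i$ is the complement of bit $i$ of $j$ (whether or not a borrow occurs), then gives $X(\Theta') = \lfloor (j-1)/2 \rfloor$ and $Y(\Theta') = \lfloor j/2 \rfloor$, with the case $i = k$ handled via the two changes of $a_k$. Thus $X - Y \in \{-1, 0\}$ equals $-1$ precisely when $j$ is even, and since $D < 1 - \eps$ the minimum cut strictly forces $v \in \bar{S}(\Theta')$ iff $j$ is even. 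For $j = 2^k + 1$, all $a_i$ with $i < k$ have passed their final change and lie in $S$, giving $X = 2^k - 1 > Y = 2^{k-1}$ and hence $v \in S$; for $\Theta' \notin [2, 2^k + 1]$, all relevant $a_i$ lie in $S$ and $Y = 0$, again forcing $v \in S$ by the slack $(1 - \eps) - D > 0$. This yields the claimed $2^k$ changes, starting with $S \to \bar{S}$ at time $2$ and ending with a change at time $2^k + 1$.

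The main obstacle is the binary-arithmetic identity for $Y$, in particular the careful handling of $i = k$ (whose $a_k$ performs only two changes rather than oscillating fully) and of the boundary step $j = 2^k + 1$, where the bit-decomposition of $X$ breaks down (because all $a_i$ with $i < k$ have passed their last change) but the conclusion $v \in S$ is preserved by the collective weight $X = 2^k - 1$ exceeding $Y = 2^{k-1}$.
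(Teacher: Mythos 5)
Your proposal follows the paper's own proof almost step for step: the same reduction via Lemma~\ref{lem:mimicking_gadget}, the same local comparison at $v$ between $(1-\eps)+X$ and $Y+D$, the same identities $X=\lfloor(j-1)/2\rfloor$ and $Y=\lfloor j/2\rfloor$ (the paper writes them as $\lfloor(\Theta-1)/2\rfloor$ and $\lfloor\Theta/2\rfloor$), and the same use of the slack $1-\eps$ against the additional outgoing edges of $v$.

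There is, however, one step that does not follow as written, and it is precisely the step the paper also glosses over. You invoke $S_{b_i}(\Theta)=S_{a_i}(\Theta-p_i-1)$ ``in every minimum cut'' and then evaluate it for arguments $\Theta-p_i-1<0$ using the convention that $a_i\in S$ before time $0$; this is how you obtain $[b_i\in\bar{S}(\Theta'+1)]=0$ for all $i$ with $2^i>j$. But Lemma~\ref{lem:mimicking_gadget} only pins down $S_{b_i}$ on the interval $(\tau_{(a_i,b_i)},T-1)$. For $b_i$-times $\Psi<\tau_{(a_i,b_i)}=p_i+1$ the edge $(a_i,b_i)$ cannot contribute to any cut at all, since the corresponding tail time $\Psi-\tau_{(a_i,b_i)}$ is negative and lies outside the integration range $[0,T]$. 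The only costs attached to $b_i$ at such times are therefore $\beta_i=2^{i-1}+\eps$ for $b_i\in S(\Psi)$ (via $(b_i,t)$) versus at most $u_{(v,b_i)}=2^{i-1}$ for $b_i\in\bar{S}(\Psi)$, so a local exchange forces $b_i\in\bar{S}(\Psi)$ on this entire initial window --- the opposite of what your formula for $Y$ assumes. This is not a harmless boundary effect: for $i=k$ the window covers all $\Theta'<2^k$, so $Y(\Theta')$ gains an extra $2^{k-1}$ throughout, which already exceeds the entire budget $(1-\eps)+X\le 2^{k-1}-\eps$ available on the $\bar{S}$ side while $a_k\in\bar{S}$, and would pin $v$ to $\bar{S}$ on all of $(1,2^k)$ instead of making it oscillate. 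The paper hides the same issue behind the remark that the reset step ``can be omitted, as all vertices start in $S$''; as far as I can tell the $b_i$ do \emph{not} start in $S$ in a minimum cut, so an additional argument (or a modification of the gadget that genuinely forces $b_i\in S$ before time $p_i+1$) is needed here --- both in your write-up and in the paper's.
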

	\begin{proof}
		First note that the mimicking gadget allows causing an inverted counter behavior for the $b_i$, affecting $v$ one time step before the corresponding $a_i$.
		For $\Theta\in\{0,\dots,2^k-1\}$, the $a_i$ change
		\begin{equation*}
		S_{a_i}\xrightarrow{\Theta\equiv0\bmod2^{i+1}}\bar{S}_{a_i}\\
		\bar{S}_{a_i}\xrightarrow{\Theta\equiv2^i\bmod2^{i+1}}S_{a_i}\\
		\bar{S}_{a_k}\xrightarrow{2^k}S_{a_k}.
		\end{equation*}
		The delay of $p_i+1$ for inverting the counter is chosen because the activated states of the $a_i$ and $b_i$ are opposite, i.e. $S_{a_i}(\Theta-\tau_{(a_i,v)})=1$ allows contribution of $u_{(a_i,v)}$, but $S_{b_i}(\Theta+\tau_{(v,b_i)})=0$ allows contribution of $u_{(v,b_i)}$. The reset step at time $1$, changing all $b_i$ to $S$ can be omitted, as all vertices start in $S$. So the necessary delay between $a_i$ and $b_i$ is $2^i+1$.
		The desired change pattern therefore is
		\begin{equation*}
		\bar{S}_{b_i}\xrightarrow{\Theta\equiv1\bmod2^{i+1}}S_{b_i}\\
		\bar{S}_{b_k}\xrightarrow{2^{k+1}+1}S_{b_k}\\
		S_{b_i}\xrightarrow{\Theta\equiv2^i+1\bmod2^{i+1}}\bar{S}_{b_i}
		\end{equation*}
		for $\Theta\in\{3,\dots,2^k+1\}$. This pattern is achieved by the functionality of the mimicking gadget shown in Lemma~\ref{lem:mimicking_gadget} and the fact that $G$ cannot have additional edges incident to any $b_i$.
		
		To realize that the partition changes of $v$ have to occur in the claimed way for any minimum cut $S$, we look at the edges incident to $v$. All other edges' contribution to any cut is already fixed. An edge $(a_i,v)$ contributes $2^i$ to $\capacity(S)$ if and only if $a_i\in S(\Theta-1)$ and $v\in\bar{S}(\Theta)$ ($s\in S(\Theta)$ always holds, so $(s,v)$ contributes $1-\varepsilon$ if $v\in\bar{S}(\Theta)$) for some time $\Theta$. Likewise the only way for $(v,b_i)$ to contribute to $\capacity(S)$ is $v\in S(\Theta)$ and $b_i\in\bar{S}(\Theta+1)$ for some time $\Theta$. Evaluating these contributions for the given partition changes of $a_i,b_i$ we see that the counter of the $b_i$ contributions is half a counting step -- which corresponds to one time unit -- ahead of the $a_i$ contribution counter. For the last change of $a_k$, affecting $v$ at time $2^k+1$, the $a_i$ counter gets larger than the $b_i$ counter instead of equaling it. Formally for $\Theta\in[1,2^{k}+1)$ the contribution of $(a_i,v)$ and $(v,b_i)$ edges is
		\begin{equation*}
		\sum_{i\mid S_{a_i}(\Theta-1)=1}u_{(a_i,v)}=\lfloor\frac{\Theta-1}{2}\rfloor\\
		\sum_{i\mid S_{b_i}(\Theta+1)=0}u_{(v,b_i)}=\lfloor\frac{\Theta}{2}\rfloor
		\end{equation*}
		and for $\Theta\in[2^k+1,2^k+2)$
		\begin{equation*}
		\sum_{i\mid S_{a_i}(\Theta-1)=1}u_{(a_i,v)}=2^{k}-1>2^{k-1}=\sum_{i\mid S_{b_i}(\Theta+1)=0}u_{(v,b_i)}.
		\end{equation*}
		With the additional edge $(s,v)$, the side with more potential to contribute to the cut changes every $\Theta\in\{2,\dots,2^k+1\}$, forcing $v$ to change its partition every time to ensure minimality of the cut. So the change pattern of $v$ is
		\begin{equation*}
		S_v\xrightarrow{\Theta\equiv0\bmod2}\bar{S}_v\\
		\bar{S}_v\xrightarrow{\Theta\equiv1\bmod2}S_v
		\end{equation*}
		for $\Theta\in\{2,\dots,2^k+1\}$.
		
		Edges leaving $v$ can only contribute to $\capacity(S)$ if $v\in S(\Theta)$, so whenever the gadget already ensures $v\in\bar{S}(\Theta)$, added outgoing edges cannot impede the gadgets behavior. When $H_k$ ensures $v\in S(\Theta)$, we have
		\begin{equation*}
		\sum_{i\mid S_{a_i}(\Theta-1)=1}u_{(a_i,v)}\geq\sum_{i\mid S_{b_i}(\Theta+1)=0}u_{(v,b_i)}.
		\end{equation*}
		To minimize $\capacity(S)$, the assignment $v\in S(\Theta)$ remains necessary to minimize $\capacity(S)$, because of the edge $(s,v_k)$ with capacity $1-\varepsilon$, which is larger than the sum over the capacities of all added edges.
		\qed
	\end{proof}
	
	Note that Lemma~\ref{lem:binary_counting_gadget} provides the first
	part towards the induction step of constructing a valid $H_{k + 1}$ from
	a valid $H_k$. %, by supplying the vertex $v$ of smaller period.
	In the following,
	we show how to scale periods of the $a_i$ such that $a_i$ from
	$H_k$ can serve as the $a_{i + 1}$ from $H_{k + 1}$ and $v$ can serve
	as the new $a_1$.  Afterwards, it remains to show two things.  First,
	additional edges to actually build $H_{k + 1}$ from $H_k$ can be
	introduced without losing validity.  And secondly, we need the
	initial step of the induction, i.e., the existence of a valid $H_1$
	even in the presence of only one capacity change.
	
	We say that a minimum dynamic cut $S$ \emph{remains optimal under
		scaling and translation of time} if $S$ is a minimum cut on graph
	$G=(V,E)$ with transit times $\tau_e$, capacities $u_e(\Theta)$ and
	time interval $[0,T]$ if and only if $\hat{S}$ with
	$\hat{S}_v(r\cdot\Theta+T_0)\coloneqq S_v(\Theta)\ \forall\Theta\in[0,T]$ is a
	minimum cut on $\hat{G}=(V,E)$ with transit times
	$\hat{\tau}_e\coloneqq r\cdot\tau_e$, capacities
	$\hat{u}_e(r\cdot\Theta+T_0)\coloneqq u_e(\Theta)$ and time interval
	$[T_0,r\cdot T+T_0]$ for any $r\in\mathbb{R}^+,T_0\in\mathbb{R}$.
	
	\begin{restatable}{lemma}{affineStability}
		\label{lem:affine_stability}
		Any dynamic cut remains optimal under scaling and translation of time. It also remains optimal under scaling of capacities.
	\end{restatable}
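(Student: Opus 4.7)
The plan is to establish both claims by a direct computation showing that the transformation either scales the capacity of every cut by the same positive constant or leaves relative orderings invariant, so that minimizers are preserved.

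For scaling and translation of time, I would start from the definition of $\capacity(\hat S)$ in the transformed instance $\hat G$, namely
\begin{equation*}
\capacity(\hat S) = \int_{T_0}^{rT + T_0} \sum_{\substack{(v,w)\in E\\ \hat S_v(\Theta')=1\\ \hat S_w(\Theta'+\hat\tau_e)=0}} \hat u_{(v,w)}(\Theta') \dif \Theta',
\end{equation*}
and apply the substitution $\Theta' = r\Theta + T_0$, so that $\dif\Theta' = r\dif\Theta$ and the integration range becomes $[0,T]$. By the definition of $\hat S$, the condition $\hat S_v(r\Theta+T_0)=1$ is equivalent to $S_v(\Theta)=1$, and because $\hat\tau_e = r\tau_e$, the offset for $w$ becomes $\hat S_w(r\Theta + T_0 + r\tau_e) = S_w(\Theta+\tau_e)$. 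By the definition of $\hat u_e$, the integrand equals $u_{(v,w)}(\Theta)$. Putting this together yields $\capacity(\hat S) = r \cdot \capacity(S)$ for every cut $S$ and its translate $\hat S$.

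Since the map $S \mapsto \hat S$ is a bijection between valid cuts of $G$ and valid cuts of $\hat G$ (the requirements $\hat S_s \equiv 1$ and $\hat S_t \equiv 0$ on the considered interval translate directly), and since $r>0$, scaling the capacity of every cut by a positive factor preserves minimizers. Thus $S$ is a minimum cut in $G$ if and only if $\hat S$ is a minimum cut in $\hat G$. The second claim on capacity scaling is immediate from the capacity formula: multiplying each $u_e$ by some $c > 0$ multiplies $\capacity(S)$ by $c$ for every $S$, so again the minimizer is preserved.

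The only mild technical point, and the step to be most careful about, is checking that the boundary conventions agree under the change of variables — in particular that the convention $\hat S_v(\Theta')=1$ for $\Theta' > rT + T_0$ corresponds exactly to $S_v(\Theta)=1$ for $\Theta > T$, so that edges straddling the end of the time horizon contribute consistently in both instances. This follows directly from the defining relation $\hat S_v(r\Theta + T_0) = S_v(\Theta)$ extended by the standard convention outside the considered interval, so no subtlety arises beyond bookkeeping.
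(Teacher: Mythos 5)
Your proposal is correct and follows essentially the same route as the paper's proof: both argue that translation leaves cut capacities unchanged, scaling time multiplies every cut's capacity by the same factor $r$, and scaling capacities multiplies every cut's capacity by the same constant, so minimizers are preserved. You simply make the change-of-variables computation and the bijection between cuts explicit, which the paper states in words.
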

	\begin{proof}
		The capacity of a cut is unaffected by translation of time. Scaling time scales the capacity of any cut by the same factor. So the relative difference of the capacity of different cuts is not affected by scaling and translation of time.
		
		Scaling capacities alters the capacity of every cut by the same factor, so the relative difference in capacity of cuts remains unchanged.
		\qed
	\end{proof}
	
	With this, we can combine binary counting gadgets of different sizes to create a large binary counting gadget $H_\ell$ while only requiring a single capacity change.
	
	\begin{restatable}{lemma}{binaryCountingGadgetFinal}
		\label{lem:binary_counting_gadget_final}
		For every $\ell\in\mathbb{N}^+$, there exists a polynomially sized, acyclic dynamic flow network with only one capacity change that contains a valid binary counting gadget $H_\ell$.
	\end{restatable}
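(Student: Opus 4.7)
The plan is to prove the claim by induction on $\ell$, using the valid binary counting gadget $H_k$ produced by the inductive hypothesis together with Lemmas~\ref{lem:mimicking_gadget},~\ref{lem:binary_counting_gadget}, and~\ref{lem:affine_stability} to build a valid $H_{k+1}$ without introducing any additional capacity change.

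For the base case $\ell=1$, I would construct a small acyclic network in which a single capacity change on an edge incident to $a_1$, together with the constant-capacity edges of $H_1$ (the mimicking gadget $(a_1,b_1),(b_1,t)$ with capacities $\alpha_1=1+2\varepsilon,\beta_1=1+\varepsilon$, the forcing edges $(a_1,v),(v,b_1)$ of capacity $1$, and the asymmetry edge $(s,v)$ of capacity $1-\varepsilon$ as in Figure~\ref{fig:binaryCountingGadget}), forces $a_1$ in every minimum cut to switch partitions exactly at the two times required by the validity definition. Validity of $H_1$ is then verified by direct case analysis. For the inductive step, assume a valid $H_k$ sits inside an acyclic polynomial-size network $G_k$ with a single capacity change. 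By Lemma~\ref{lem:binary_counting_gadget}, the central vertex $v^k$ of $H_k$ has period $1$ with $2^k$ changes. I would apply Lemma~\ref{lem:affine_stability} to rescale time of $G_k$ by a factor of $2$; this turns $v^k$ into a vertex of period $2$ and each $a_i^k$ into a vertex of period $p_{i+1}=2^{i+1}$, which are exactly the periods demanded of $a_1^{k+1}$ and $a_{i+1}^{k+1}$ in $H_{k+1}$. I would then extend the rescaled $G_k$ by adding a fresh central vertex $v^{k+1}$, fresh mimicking-target vertices $b_1^{k+1},\dots,b_{k+1}^{k+1}$, and all edges of Figure~\ref{fig:binaryCountingGadget} at the standard capacities $\alpha_j,\beta_j,2^{j-1}$ and an asymmetry edge $(s,v^{k+1})$ of capacity $1-\varepsilon$. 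Validity of the resulting $H_{k+1}$ then follows by combining the induction hypothesis with Lemmas~\ref{lem:mimicking_gadget} and~\ref{lem:binary_counting_gadget}, and acyclicity is preserved because every new edge leaves an input vertex of the new layer and enters either $v^{k+1}$, some $b_j^{k+1}$, or $t$.

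Two technical obstacles need to be resolved before this composition works, and they are where the proof's difficulty lies. First, in the rescaled $G_k$ the central vertex $v^k$'s first change sits at time $4$, whereas the rescaled $a_i^k$ still first change at time $0$, so the would-be new counter is out of sync. I would reconcile this by inserting a layer of mimicking gadgets of transit time $4$ on each $a_i^k$, producing re-timed copies $\hat a_{i+1}^{k+1}$ whose first change is also at time $4$, and then applying the translation half of Lemma~\ref{lem:affine_stability} to normalize all first-change times back to $0$. Second, the new outgoing edges attached to $v^k$ (into $v^{k+1}$ and into $b_1^{k+1}$) sum to roughly $2+2\varepsilon$ in capacity, which violates the precondition of Lemma~\ref{lem:binary_counting_gadget} applied to the inner $H_k$. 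I would remedy this by additionally scaling the capacities of $G_k$ by a sufficiently large factor $\lambda$ via the capacity-scaling half of Lemma~\ref{lem:affine_stability}, so that the scaled asymmetry edge $(s,v^k)$ with capacity $\lambda(1-\varepsilon)$ dominates the newly attached capacities and the inner gadget's behavior is preserved. Each inductive step adds only $O(k)$ new vertices and edges and introduces no new capacity change, so the final network has size $O(\ell^2)$ with the single capacity change inherited from the base case. The main obstacle is the temporal misalignment in the inductive step; handling it cleanly requires a coordinated use of mimicking gadgets together with the scaling, translation, and capacity-scaling invariances of Lemma~\ref{lem:affine_stability}, and verifying that neither the inner gadget's validity nor acyclicity is broken along the way is the delicate bookkeeping step.
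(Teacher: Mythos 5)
Your plan is, at its core, the paper's construction in different clothing: both rely on the same four ingredients (mimicking gadgets to re-time input vertices, the affine invariance of Lemma~\ref{lem:affine_stability}, geometric capacity scaling so that edges attached for later levels cannot disturb earlier gadgets, and a bottom gadget whose two forced changes come from a single capacity change). The organizational difference is that the paper builds one fixed graph $G_\ell$ with a fixed horizon $T=6$, \emph{compressing} the gadgets in time by $\Delta_k=2^{-k}$ and damping capacities by $\lambda_k=5^{-k}$, with every input $a_{k,i}$ a \emph{fresh} vertex mimicking the central vertex $v_{k-i}$, and then proves correctness by one induction inside that single graph; you instead \emph{expand} time by a factor $2$ per level, reuse $v^k$ directly as $a_1^{k+1}$, and re-time the remaining inputs by delay-$4$ mimic copies. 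That reorganization is workable (exponential transit times are still polynomial in encoding size), so I would not call the route wrong, but two concrete points must be repaired before it is a proof.

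First, the base case as you state it cannot work by "direct case analysis" from one capacity change alone: a single change forces only one partition switch; the second switch of $a_1$ (back to $S$ at time $2^1$) must be forced by the time-horizon boundary, exactly as in the paper's $H_\text{start}$, where $(v_0,t)$ has transit time $T-3$ so that the approaching horizon kills its contribution and forces the change back. You never invoke the horizon, and without it the statement you assert for $H_1$ is false (this is precisely why the infinite-time setting needs two capacity changes). Note also that when you later rescale time you rescale the horizon along with it, so you must check at every level that the (doubled) horizon still leaves room for the new gadget's change times and the mimicking windows $(\tau_{(a,b)},T-\tau_{(b,t)})$. Second, your induction hypothesis "a valid $H_k$ sits inside $G_k$" is a statement about minimum cuts of $G_k$ and does not automatically persist when you attach new edges to vertices of $H_k$: validity is not monotone under edge additions. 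The persistence has to be argued through the attachment tolerances built into Lemmas~\ref{lem:mimicking_gadget} and~\ref{lem:binary_counting_gadget}, which means either strengthening the hypothesis (validity robust under attaching a prescribed budget of outgoing capacity at the designated vertices) or, as the paper does, running the induction once inside the final graph and verifying the accumulated budgets explicitly (the checks $\sum_i\lambda_{k+i}\gamma_i<\lambda_k(1-\varepsilon)$ at each $v_k$ and $\lambda_k(\alpha_i+2^{i-1}+\varepsilon)<\lambda_k\gamma_i$ at each $a_{k,i}$). Your "sufficiently large $\lambda$" handles only the one-step attachment at $v^k$; you must also fix the capacities of your re-timing mimic edges so that they simultaneously exceed the total outgoing capacity of each copy (including the counter edges and the copy's own outgoing mimic edge added one level later) while staying inside the budget allowed at the vertex they leave -- the two-sided constraint that the paper resolves with the explicit constants $\gamma_i=2^i+4\varepsilon$, $\varepsilon=\tfrac16$, $\lambda_k=5^{-k}$.
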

	\begin{proof}
		To create the necessary change patterns for the $a_i$ of $H_\ell$, we chain binary counting gadgets of increasing size, beginning with $H_2$ up to $H_\ell$ together creating the graph $G_\ell$ as shown in Figure~\ref{fig:exponentiallyComplexCut_abstract}. The coarse idea is to ensure the behavior of all $a_{k,i}$ by having them mimic the central vertex $v_{k-i}$ of the correct smaller binary counting gadget.
		
		Timewise, the binary counting gadgets $H_k$ are scaled by $\Delta_k\coloneqq2^{-k}$ and translated by $T_{0,k}\coloneqq2+3(1-2\Delta_k)+\Delta_k$.
		So the first change from $S$ to $\bar{S}$ of the $a_{k,i}$ in $H_k$ should happen at time $T_{0,k}$ and the period between two successive changes of $a_{k,i}$ is $p_{k,i}\coloneqq\Delta_k\cdot2^i=\Delta_{k-i}$, which is the period between two successive changes of $v_{k-i}$. 
		To correctly synchronize the different $H_k$, the delay for the mimicking between the binary counting gadgets is set to $\tau_{(v_{k-i},a_{k,i})}=3\cdot(\Delta_{k-i}-2\Delta_k)+\Delta_k$. This is chosen so that $T_{0,k-i}+2\Delta_{k-i}+\tau_{(v_{k-i},a_{k,i})}=T_{0,k}$. We set $\varepsilon\coloneqq\frac{1}{6}$ for the capacity of $(s,v_k)$ in $H_k$.
		
		To ensure that the connecting mimicking gadgets do not exceed the permitted capacities leaving $v_k$ of $H_k$, the capacities of all edges in the $H_k$ are scaled by factor $\lambda_k\coloneqq\frac{1}{5^k}$. In accordance with this scaling and the capacity requirements of Lemma~\ref{lem:mimicking_gadget}, the capacities of the mimicking gadget connecting $v_{k-i}$ to $a_{k,i}$ are $u_{(v_{k-i},a_{k,i})}=\lambda_k\gamma_i$ with $\gamma_i\coloneqq2^i+4\varepsilon$ and $u_{(a_{k,i},t)}=\lambda_k\varepsilon$.
		
		Ensuring the partition changes of $v_0$ can be done with one capacity change as shown in Figure~\ref{fig:inductionStartForExpCutOriginalVersion}, using a path $s,v_0,t$. Transit times for $H_\text{start}$ are $\tau_{(s,v_0)}=1$ and $\tau_{(v_0,t)}=T-3$, capacities are $u_{(s,v_0)}=2$ and $u_{(v_0,t)}(\Theta)=1,\Theta<2$ changing to $u_{(v_0,t)}(\Theta)=3,\Theta\geq 2$. The time horizon is set to $T\coloneqq6$.
		
		$G_\ell$ clearly has polynomial size in regard to $\ell$ and there is only one capacity change.
		The previously shown functionality of the binary counting gadget in Lemma~\ref{lem:binary_counting_gadget} and the mimicking gadget in Lemma~\ref{lem:mimicking_gadget} are the basis for showing, that the contained $H_\ell$ is valid.
		Lemma~\ref{lem:affine_stability} provides that those gadgets' functionality is also given under the shifted and compressed time in which they are used for the construction of $G_\ell$. To prove the correct behavior of the constructed graph, it needs to be shown that the mimicking gadgets adhere to the capacity restrictions established earlier, and that their attachment to smaller binary counting gadgets does not impede the behavior of those counting gadgets.
		
		The correctness of the mimicking gadgets' capacities can easily be seen. The $a_{k,i}$ have no incoming edges outside of the mimicking gadget, so $u_{(a_{k,i},t)}>0$ fulfills the requirement for $(a_{k,i},t)$. There are two additional edges leaving $a_{k,i}$, one to $b_{k,i}$ and one to $v_k$. The combined capacity of all outgoing edges of $a_{k,i}$ is therefore $\lambda_k(\alpha_i+2^{i-1}+\varepsilon)=\lambda_k(2^i+3\varepsilon)<\lambda_k\gamma_i$, so the restriction for $(v_{k-i},a_{k,i})$ holds.
		
		To see that the chaining of binary counting gadgets does not impede the behavior of smaller binary counting gadgets, notice that the capacities of the edges leaving the $v_k$ are chosen to not cross the established threshold:
		\begin{equation*}
		\sum_{i\in\mathbb{N}^+,i\leq\ell-k}\lambda_{k+i}\gamma_i=\lambda_k\sum_{i\in\mathbb{N}^+,i\leq\ell-k}\left(\frac{2}{5}\right)^i+\frac{4}{6\cdot5^i}<\lambda_k(\frac{2}{3}+\frac{1}{6})=\lambda_k(1-\varepsilon)
		\end{equation*}
		Note that the behavior of $v_0$ is also unimpeded by the connections to the binary counting gadgets. This follows from the same argument for $k=0$ as well as the observation, that the changes of $v_0$ are -- ignoring connections to the binary counting gadgets -- always ensured by a capacity difference of at least $\lambda_0$.
		
		We use induction to show that the binary counting gadgets' $a_{k,i}$ change at the required times for any minimum cut $S$. More specifically we show
		\begin{equation*}
		S_{v_k}\xrightarrow{\Theta\equiv T_{0,k}\bmod2\Delta_{k}}\bar{S}_{v_k}\\
		\bar{S}_{v_k}\xrightarrow{\Theta\equiv T_{0,k}+\Delta_k\bmod2\Delta_{k}}S_{v_k}
		\end{equation*}
		for all
		$\Theta\in\{T_{0,k}+2\Delta_{k},T_{0,k}+3\Delta_k,\dots,T_{0,k}+1+\Delta_{k}\}$.
		
		The correct startup behavior of $v_0$ requires two partition changes. For now we ignore the edges from the attachment to the binary counting gadgets, since they do not affect behavior, as argued above. The partition change $S\xrightarrow{2}\bar{S}$ directly follows from the capacity change of $(v_0,t)$ at time $\Theta=2$ increasing the potential contribution of $v_0\in S(2)$. The other partition change $\bar{S}\xrightarrow{3}S$ is a result of the approaching time horizon $T$, which reduces the potential contribution of $v_0\in S(3)$ to zero.
		
		Now assume, for a fixed $k\in\mathbb{N}$, gadgets $H_\text{start}$ to $H_{k-1}$ work correctly, producing the desired changes. Because of the functionality of the mimicking gadget, with the delay $\tau_{(v_{k-i},a_{k,i})}$, by induction $a_{k,i}$ experiences changes $S_{a_{k,i}}\xrightarrow{\Theta}\bar{S}_{a_{k_i}}$ at times
		\begin{equation*}
		\Theta\equiv T_{0,k-i}+\tau_{(v_{k-i},a_{k,i})}\bmod2\Delta_{k-i} \equiv T_{0,k}\bmod2\Delta_{k-i}
		\end{equation*}
		and changes $\bar{S}_{a_{k,i}}\xrightarrow{\Theta}S_{a_{k_i}}$ at times
		\begin{equation*}
		\Theta\equiv T_{0,k-i}+\Delta_{k-i}+\tau_{(v_{k-i},a_{k,i})}\bmod2\Delta_{k-i} \equiv T_{0,k}+\Delta_{k-i}\bmod2\Delta_{k-i}
		\end{equation*}
		beginning with $\Theta=T_{0,k-i}+2\Delta_{k-i}+\tau_{(v_{k-i},a_{k,i})}=T_{0,k}$ up to $\Theta=T_{0,k-i}+1+\Delta_{k-i}+\tau_{(v_{k-i},a_{k,i})}=T_{0,k}+1-\Delta_{k-1}$. This means that beginning at $T_{0,k}$ the $a_{k,i}$ form a binary counter increasing every $2\Delta_k$ with the additional change $\bar{S}_{a_k}\xrightarrow{T_{0,k}+1}S_{a_k}$. Now Lemma~\ref{lem:binary_counting_gadget} -- multiplying the time with factor $\Delta_k$ and adding the initial offset of $T_{0,k}$ -- provides the desired change timings for $v_k$.
		
		The correct change pattern of the $a_{\ell,i}$ required for the validity of $H_\ell$ follow from the stronger induction hypothesis for the $v_k$, as seen above during the induction step.
		\qed
	\end{proof}
	
	\begin{figure}[t]
		\centering
		\includegraphics[width=\textwidth]{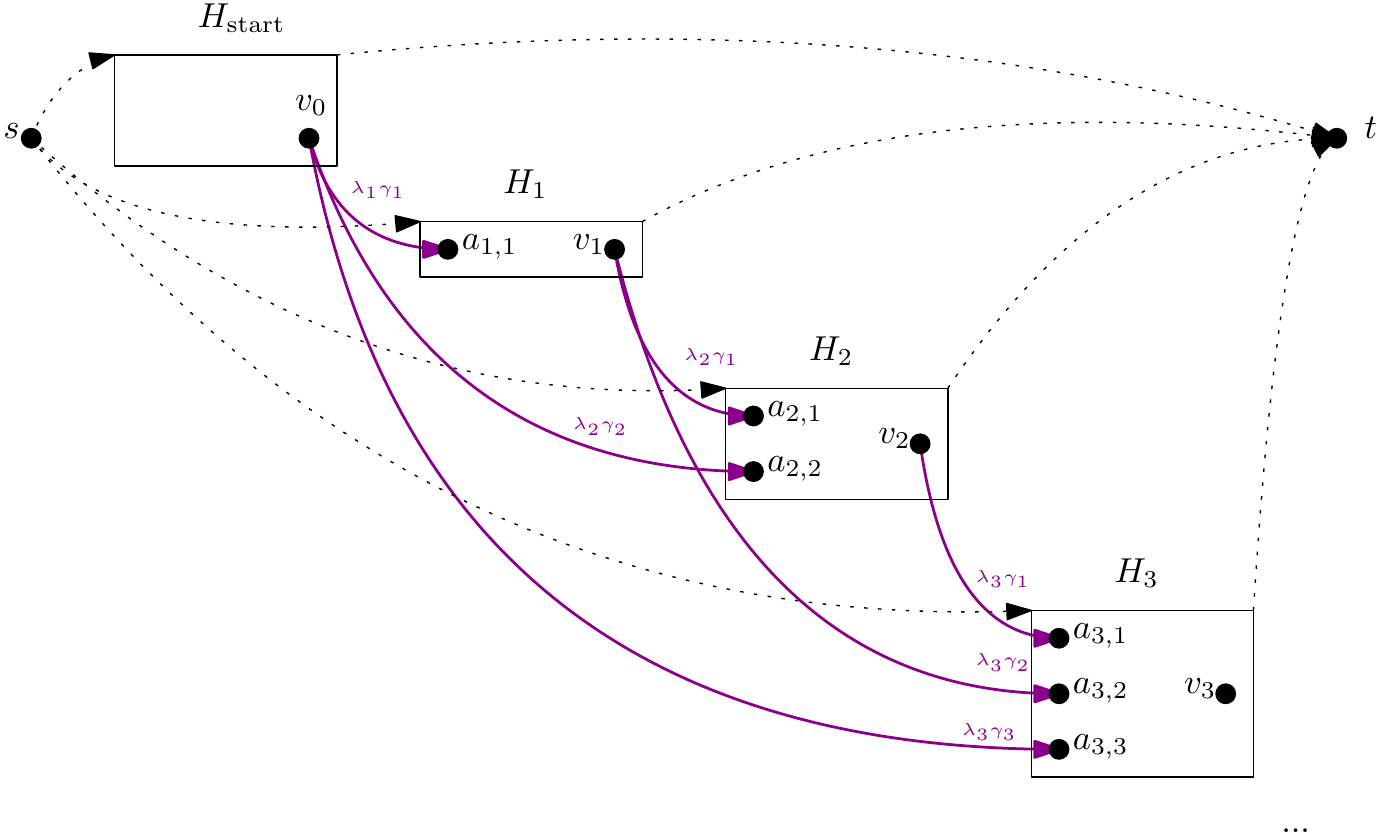}
		\caption{Construction linking binary counting gadgets to ensure $\ch_{v_\ell}=2^\ell$ partition changes at $v_\ell$ in a minimum cut; purple edges represent mimicking gadgets, numbers are capacities.}
		\label{fig:exponentiallyComplexCut_abstract}
	\end{figure}
	
	To be able to use the complexity of minimum cuts to show complexity of maximum flows, we need the following lemma.
	
	\begin{restatable}{lemma}{CutRestrictsFlow}
		\label{obs:CutRestrictsFlow}
		Every edge contributing to the capacity of some minimum cut has to
		be saturated by every maximum flow during the time where it
		contributes to a cut.
		Moreover, every edge $e = (v,w)$ with $v\in\bar{S}(\Theta)$ and
		$w\in S(\Theta+\tau_e)$ for some minimum cut $S$ may not route flow
		at time $\Theta$ for any maximum flow.
	\end{restatable}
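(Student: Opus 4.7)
The plan is to derive both claims from the min-cut max-flow duality of Theorem~\ref{trm:cut_flow_duality} via a complementary-slackness argument. Fix an arbitrary maximum flow $f$ and an arbitrary minimum cut $S$, so $|f|=\capacity(S)$.

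The key ingredient is a pointwise cut-flow equality: for every dynamic flow $g$ and every dynamic cut $S$,
\begin{equation*}
|g| = \int_0^T\!\Bigg(\sum_{\substack{(v,w)\in E\\ S_v(\Theta)=1\\ S_w(\Theta+\tau_{e})=0}}\! g_{(v,w)}(\Theta) \;-\! \sum_{\substack{(v,w)\in E\\ S_v(\Theta)=0\\ S_w(\Theta+\tau_{e})=1}}\! g_{(v,w)}(\Theta)\Bigg)\dif\Theta.
\end{equation*}
To prove this, I would use the paper's standing piecewise constant assumption and work interval by interval. On each such interval, summing the strong flow conservation rate equations over the vertices currently in $\bar{S}(\Theta)$ cancels all internal edges; only edges crossing the partition survive, matched between tails at time $\Theta$ and heads at time $\Theta+\tau_e$. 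The boundary conventions ($g_e\equiv 0$ outside $[0,T-\tau_e]$ and $S_v\equiv 1$ for $\Theta>T$) ensure that the only surviving net contribution is $\ex_g(t,T)=|g|$.

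Given this equality, weak duality is immediate: since $0\le g_e(\Theta)\le u_e(\Theta)$, the forward sum is bounded above by the integrand defining $\capacity(S)$ and the backward sum is non-negative, so $|g|\le\capacity(S)$. Applying this with $g=f$ together with $|f|=\capacity(S)$ forces equality throughout the integral. Consequently $f_e(\Theta)=u_e(\Theta)$ whenever $e=(v,w)$ is a forward edge of $S$ at time $\Theta$, which is the first statement, and $f_e(\Theta)=0$ whenever $v\in\bar{S}(\Theta)$ and $w\in S(\Theta+\tau_{e})$, which is the second statement. Since the choice of maximum flow and minimum cut was arbitrary, both conclusions hold for every maximum flow.

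The main obstacle is setting up the cut-flow equality rigorously in the time-dependent regime, where the partition varies with $\Theta$ and edge delays decouple the tail-time and head-time of the cut-membership check. Under the piecewise constant convention, however, this bookkeeping reduces to a finite number of algebraic cancellations at the interval boundaries, and the remaining reasoning is the standard complementary-slackness argument used in static max-flow.
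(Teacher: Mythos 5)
Your proposal is correct, but it takes a genuinely different route from the paper. The paper argues via flow paths: every path carrying flow from $s$ to $t$ must cross the cut from $S$ to $\bar{S}$ at least once, so an unsaturated forward edge would give $|f|<\capacity(S)$, and flow on a backward edge ($v\in\bar{S}(\Theta)$, $w\in S(\Theta+\tau_e)$) would have to cross forward twice, again forcing $|f|<\capacity(S)$; both contradict Theorem~\ref{trm:cut_flow_duality}. You instead prove the identity that $|g|$ equals the integral of (forward crossing flow minus backward crossing flow) for the time-dependent cut, and then run the standard complementary-slackness argument, using Theorem~\ref{trm:cut_flow_duality} only to supply $|f|=\capacity(S)$. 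Your identity is correct and your sketch is the right derivation: strong flow conservation gives a zero net \emph{rate} almost everywhere at each non-terminal vertex, so summing these rates weighted by $\mathbf{1}[S_v(\zeta)=0]$, substituting $\zeta=\Theta+\tau_e$ per incoming edge, and invoking the conventions $g_e\equiv 0$ outside $[0,T-\tau_e]$ and $S_v\equiv 1$ beyond $T$ leaves exactly the forward-minus-backward crossing terms and the excess at $t$. What the comparison buys: the paper's argument is shorter and intuitive but implicitly leans on decomposing a dynamic flow into paths (never stated for flows over time), whereas yours is self-contained from the conservation constraints and makes precise that saturation and zero backward flow hold almost everywhere (which the standing piecewise-constant convention upgrades to the whole intervals in question); the price is the bookkeeping with the $\tau_e$-shifted cut membership, which you only sketch but which goes through, and the (shared, implicit) assumption that flows are non-negative. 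So this is a valid, arguably more rigorous alternative proof of the lemma.
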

	\begin{proof}
		For any minimum cut $S$, flow is routed from $s\in S$ to $t\in\bar{S}$. This means that any path along which flow is routed has to contain at least one edge allowing flow to move from $S$ to $\bar{S}$. All edges allowing flow to traverse from $S$ to $\bar{S}$ contribute to $\capacity(S)$. As such, if one contributing edge was not saturated by a flow $f$, the value $|f|$ would be smaller than $\capacity(S)$. Then the cut-flow duality of Theorem~\ref{trm:cut_flow_duality} provides that $f$ cannot be a maximum flow.
		
		Given a minimum cut $S$, flow of $f$ routed over an edge $e=(v,w)$ with $v\in\bar{S}(\Theta)$ and $w\in S(\Theta+\tau_e)$ has to cross multiple edges from $S$ to $\bar{S}$ to reach $t$, one before $e$ and one after it. So even if $f$ saturates all edges contributing to $\capacity(S)$ as discussed above, less flow than $\capacity(S)$ can reach $t$. With this, Theorem~\ref{trm:cut_flow_duality} again provides that $f$ is no maximum flow.
		\qed
	\end{proof}
	
	To obtain the following theorem, it only remains to observe that the
	structure of the minimum cut in the construction of
	Lemma~\ref{lem:binary_counting_gadget_final} also implies
	exponentially complex maximum flows, using
	Lemma~\ref{obs:CutRestrictsFlow}.  Further note that
	discretization of time is possible.
	
	\begin{restatable}{theorem}{trmExponentiallyComplexCut}
		\label{trm:exponentially_complex_cut}
		There exist dynamic flow networks with only one capacity change
		where every minimum cut and maximum flow has exponential complexity.
		This even holds for acyclic networks and discrete time.
	\end{restatable}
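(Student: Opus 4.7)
The plan is to combine Lemma~\ref{lem:binary_counting_gadget_final} with the flow-forcing property of Lemma~\ref{obs:CutRestrictsFlow}, and then verify that time can be discretized. First I invoke Lemma~\ref{lem:binary_counting_gadget_final} with parameter $\ell$ to obtain a polynomially sized acyclic dynamic flow network $G_\ell$ with a single capacity change that contains a valid binary counting gadget $H_\ell$. By validity, every minimum cut $S$ of $G_\ell$ forces the central vertex $v_\ell$ to change partition $2^\ell$ times, so the cut complexity is at least $2^\ell$, which is exponential in the input size.

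For the maximum flow claim, I single out the edge $e = (a_{\ell,1}, v_\ell)$ of $H_\ell$, which has capacity $1$ and a transit time scaled by $\Delta_\ell$. The partition-change pattern forced by validity of $H_\ell$ makes the pair $(a_{\ell,1}, v_\ell)$ behave as the top two bits of a binary counter, so that $e$ contributes to the cut whenever $a_{\ell,1} \in S$ and $v_\ell \in \bar{S}$ at the appropriate shifted time, and is a wrong-direction edge whenever $a_{\ell,1} \in \bar{S}$ and $v_\ell \in S$. Each configuration arises $\Omega(2^\ell)$ times during the active interval, and Lemma~\ref{obs:CutRestrictsFlow} forces the flow on $e$ to equal its capacity at the former instants and to vanish at the latter, producing $\Omega(2^\ell)$ flow changes on this single edge and hence exponential flow complexity.

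For discretization, note that every transit time and every capacity-change time in $G_\ell$ is a rational multiple of $\Delta_\ell = 2^{-\ell}$. By Lemma~\ref{lem:affine_stability} we may rescale time by $r = 2^\ell$ to obtain an equivalent instance with only integer times; acyclicity is preserved since every edge in the counting and mimicking gadgets is directed uniformly from the source side toward the target side.

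The main obstacle I anticipate is the flow-side argument: it is tempting but insufficient to reason that a cut oscillating exponentially forces the flow to do the same, because flow on an individual edge could in principle remain constant through many cut changes. The key is to isolate an edge whose cut status actually swings between \emph{both} extremes of Lemma~\ref{obs:CutRestrictsFlow}---saturated and forbidden---at exponentially many times; verifying this requires carefully tracking the binary-counter correlation between $a_{\ell,1}$ and $v_\ell$ together with the transit-time shift on $e$.
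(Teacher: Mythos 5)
Your proposal is correct and follows essentially the same route as the paper: obtain $G_\ell$ from Lemma~\ref{lem:binary_counting_gadget_final} for exponential cut complexity, then track the edge $(a_{\ell,1},v_\ell)$, which alternates exponentially often between being a forward cut edge (forced saturated) and a backward edge (forced to carry no flow) by Lemma~\ref{obs:CutRestrictsFlow}, and finally discretize using that all times are multiples of $\Delta_\ell$. The only cosmetic deviation is the unscaled capacity value you state for $(a_{\ell,1},v_\ell)$ (it is scaled by $\lambda_\ell$ in the construction), which does not affect the argument.
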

	\begin{proof}
		A valid binary counting gadget $H_\ell$ has a central vertex $v_\ell$ that experiences $2^k$ partition changes in any minimum cut $S$, as shown in Lemma~\ref{lem:binary_counting_gadget}. Since Lemma~\ref{lem:binary_counting_gadget_final} provides the existence of a polynomially sized, acyclic dynamic flow network $G_\ell$ containing a valid $H_\ell$, so any minimum cut in $G_\ell$ has exponential complexity. The construction of $G_\ell$ uses only transit times and change timings that are multiples of $\Delta_\ell$, so discretizing time to units of length $\Delta_\ell$ provides a discrete time dynamic flow network with the same properties.
		
		To see that this partition change pattern with exponentially many changes also implies that any maximum flow has to have exponential complexity, we need Lemma~\ref{obs:CutRestrictsFlow}, which shows that the minimum cuts impose restrictions on maximum flows.
		The partition change pattern of $a_{\ell,1}$, changing every $2\Delta_\ell$ and $v_\ell$, changing every $\Delta$ in $S$ results in $(a_{\ell,1},v_\ell)$ changing from an edge from $S$ to $\bar{S}$ to an edge from $\bar{S}$ to $S$ exponentially often.
		This implies exponentially many changes of $f_{(a_{\ell,1},v_\ell)}$ in any maximum flow $f$ in $G_\ell$.
		\qed
	\end{proof}
	
	\begin{figure}[t]
		\centering
		\includegraphics{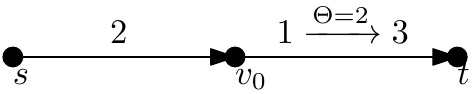}
		\caption{Construction of $H_\text{start}$, providing the partition changes of $v_0$ needed for $G_\ell$ with one capacity change, numbers are capacities, $\tau_{(v_0,t)}=T-3$.}
		\label{fig:inductionStartForExpCutOriginalVersion}
	\end{figure}
	
	Note that the construction from
	Lemma~\ref{lem:binary_counting_gadget_final} requires a specific time
	horizon $T$.  In the case of infinite considered time, two capacity
	changes suffice to obtain the same result.
	
	\begin{restatable}{corollary}{corExponentiallyComplexCutInfTime}
		\label{cor:exponentially_complex_cut_inf_time}
		Theorem~\ref{trm:exponentially_complex_cut} also holds for infinite
		considered time with two capacity changes.
	\end{restatable}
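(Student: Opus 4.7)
The plan is to lift the construction from Lemma~\ref{lem:binary_counting_gadget_final} to infinite considered time by replacing the role of the time horizon in $H_\text{start}$ with a second capacity change on the same edge, leaving everything else untouched. Recall that in $H_\text{start}$ the first change $S_{v_0}\xrightarrow{2}\bar{S}_{v_0}$ is produced by the jump of $u_{(v_0,t)}$ from $1$ to $3$ at time $2$, while the second change $\bar{S}_{v_0}\xrightarrow{3}S_{v_0}$ is produced only because the time horizon $T=6$ combined with $\tau_{(v_0,t)}=T-3=3$ makes the potential contribution of $v_0\in S$ after time $3$ effectively zero.

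For infinite considered time I would instead add a second capacity change on $(v_0,t)$ at time $3$ that brings $u_{(v_0,t)}$ back down to $1$ (or to $0$). After this change, the potential contribution of $v_0\in S$ is at most $1$, which is strictly smaller than the contribution $2$ of $v_0\in\bar{S}$ via $(s,v_0)$, so any minimum cut must have $\bar{S}_{v_0}\xrightarrow{3}S_{v_0}$. Symmetrically, before time $2$ the cheaper option is $v_0\in S$ and between times $2$ and $3$ it is $v_0\in\bar{S}$, so $v_0$ still exhibits exactly the same two partition changes as in the original proof. Outside the bounded activity window all vertices remain in $S$, matching the definition of minimum cut with infinite considered time.

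With this modified $H_\text{start}$, the rest of the construction $G_\ell$ from Lemma~\ref{lem:binary_counting_gadget_final} is reused verbatim: the chaining of binary counting gadgets, the choice of scaling factors $\Delta_k$ and $\lambda_k$, the capacities of the connecting mimicking gadgets, and the inductive verification via Lemmas~\ref{lem:mimicking_gadget}, \ref{lem:binary_counting_gadget}, and~\ref{lem:affine_stability} never reference the particular value of $T$; they only rely on the two partition changes of $v_0$ occurring at the correct times. Hence $H_\ell$ remains a valid binary counting gadget under infinite considered time, and its central vertex $v_\ell$ still changes partition $2^\ell$ times in every minimum cut. Lemma~\ref{obs:CutRestrictsFlow} is insensitive to the time horizon as well, so the exponentially complex cut still forces an exponentially complex maximum flow, and discretization of time works exactly as in the proof of Theorem~\ref{trm:exponentially_complex_cut}.

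The only step that genuinely needs checking is that the new $H_\text{start}$ behaves correctly after being chained into $G_\ell$, i.e.\ that the additional outgoing edges from $v_0$ into the smaller binary counting gadgets do not overwhelm the capacity gaps that enforce its two partition changes; but this follows from exactly the same bound $\sum_{i}\lambda_{i}\gamma_i<\lambda_0(1-\eps)$ already verified in Lemma~\ref{lem:binary_counting_gadget_final}, since the gap separating the two configurations of $v_0$ is at least $1$ both before and after each of the two capacity changes. In total we have used precisely two capacity changes, both on the edge $(v_0,t)$, so the corollary follows.
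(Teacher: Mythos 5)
Your proposal is correct and matches the paper's proof essentially verbatim: the paper likewise adds a second capacity change on $(v_0,t)$ at time $\Theta=3$ (dropping it to zero, which is one of your two options) to replace the role of the time horizon in $H_\text{start}$, and observes that the rest of the construction carries over unchanged. Your additional check that the capacity gap of $1$ survives the attached outgoing edges of total capacity less than $1-\eps$ is consistent with the bound already established in Lemma~\ref{lem:binary_counting_gadget_final}.
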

	\begin{proof}
		The starting gadget $H_\text{start}$ shown in Figure~\ref{fig:inductionStartForExpCutOriginalVersion} can be modified to force the two changes $S\xrightarrow{2}\bar{S}$ and $\bar{S}\xrightarrow{3}S$ of $v_0$ with infinite considered time by additionally changing the capacity of $(v_0,t)$ to zero at time $\Theta=3$. The rest of the proof of Theorem~\ref{trm:exponentially_complex_cut} is not changed by the introduction of infinite considered time.
		\qed
	\end{proof}
	
	Note that if Conjecture~\ref{cjt:infTimeSingleCapacityChange_poly} holds, two capacity changes are necessary in this setting.
	
	As mentioned in the introduction, the above complexity results
	transfer to the setting where we have time-dependent transit times
	instead of time-dependent capacities. The result of Corollary~\ref{cor:exponentially_complex_cut_inf_time} can even be strengthened to only require a single transit time change.
	
	\begin{restatable}{corollary}{corExponentiallyComplexTransitTime}
		Theorem~\ref{trm:exponentially_complex_cut} also holds in the
		setting of static capacities and time-dependent transit times with a
		single change, with finite time horizon and with infinite considered
		time.
	\end{restatable}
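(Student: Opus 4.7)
The plan is to mirror the proof of Theorem~\ref{trm:exponentially_complex_cut} and Corollary~\ref{cor:exponentially_complex_cut_inf_time}, replacing only the starting gadget $H_\text{start}$ by a variant that uses a single transit-time change in place of the capacity change(s). All other components of the construction, namely the mimicking gadget (Lemma~\ref{lem:mimicking_gadget}), the binary counting gadget (Lemma~\ref{lem:binary_counting_gadget}), and the chaining of Lemma~\ref{lem:binary_counting_gadget_final}, use only static capacities and hence transfer verbatim. Thus only the two forced partition changes of $v_0$ (at times $2$ and $3$ in the base scale) have to be reproduced using a single transit-time change.

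The first step would be to construct a new $H_\text{start}$ containing $v_0$ together with a small set of auxiliary edges of static capacity, exactly one of which carries a single transit-time change at a carefully chosen instant $\Theta_0$. The key observation I would exploit is that a transit-time change acts asymmetrically on the contribution window of an edge: for $(v,w)$ with transit time changing from $\tau$ to $\tau'$ at $\Theta_0$, the edge contributes at time $\Theta$ based on $w$'s partition at time $\Theta+\tau$ for $\Theta<\Theta_0$ and at time $\Theta+\tau'$ for $\Theta\geq\Theta_0$. Choosing $\tau,\tau'$ so that the induced arrival time lands in different regimes (on opposite sides of the time horizon $T$, or of $t$'s partition boundary) lets a single change simultaneously open and close contribution windows of different edges. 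For the finite time horizon setting, I would additionally leverage the boundary of $[0,T]$ just as the original $H_\text{start}$ does, so the transit-time change alone needs to produce only one of the two required reversals. For the infinite considered time setting, where no such boundary is available, the gadget must generate both reversals from the single change; this is possible precisely because one transit-time change alters the contribution windows of every edge crossing $\Theta_0$, whereas one capacity change can only turn a single edge on or off, which is why the capacity-change version needed two changes.

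Next, I would verify three properties of the new $H_\text{start}$: (i) in every minimum cut, $v_0$ undergoes exactly the partition pattern required by the induction base of Lemma~\ref{lem:binary_counting_gadget_final}; (ii) the capacity-slack bound of $1-\varepsilon$ from Lemma~\ref{lem:binary_counting_gadget} still holds for $v_0$, so that the additional outgoing edges attaching the chained mimicking gadgets cannot perturb the forced behavior (this is the analogue of the ``margin $\lambda_0$'' argument in Lemma~\ref{lem:binary_counting_gadget_final}); (iii) the new edges and the modified transit time do not interact in any unintended way with the mimicking gadgets sitting between $v_0$ and the $a_{k,i}$. Once these are established, Lemmas~\ref{lem:affine_stability} and \ref{lem:binary_counting_gadget_final} give the exponentially complex minimum cut as before, and Lemma~\ref{obs:CutRestrictsFlow} transfers the exponential complexity to every maximum flow, with the discretization of time argument carrying over unchanged.

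The main obstacle is step one: pinning down concrete transit times and capacities such that a single transit-time change produces two reversals in the sign of the cost difference between $v_0\in S$ and $v_0\in\bar{S}$ at the correct times, with the correct sign at every other time, even in the presence of the extra outgoing edges of $v_0$ that feed the chained binary counting gadgets. In the infinite-time variant this is particularly delicate, because both reversals must come from the single change rather than from any boundary effect. Once such a gadget is exhibited, the remainder of the proof is bookkeeping directly analogous to the arguments already carried out for Theorem~\ref{trm:exponentially_complex_cut} and Corollary~\ref{cor:exponentially_complex_cut_inf_time}.
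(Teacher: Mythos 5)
Your high-level plan coincides with the paper's: keep the mimicking gadgets, the binary counting gadgets, and the chaining of Lemma~\ref{lem:binary_counting_gadget_final} untouched (they use only static capacities), and swap in a new starting gadget that forces the two partition changes of $v_0$ with a single transit-time change. However, the entire substance of the corollary's proof is precisely the concrete gadget you defer, and you explicitly flag it as your ``main obstacle'' rather than exhibiting it. As it stands the argument is therefore incomplete: you describe in general terms how a transit-time change shifts contribution windows in the cut-capacity sum, but you never produce transit times and capacities for which the sign of the cost difference for $v_0$ flips exactly at times $2$ and $3$, which is the one thing that needed to be proven.

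The paper closes this gap with a very simple mechanism that is worth internalizing, because it is different in spirit from the one you sketch. Rather than reasoning about the contribution window of the changed edge inside the cut, the paper (Figure~\ref{fig:inductionStartForExpCutInfiniteTimeDynamicTransitTimes}) uses a path $s,v_0,t$ and \emph{increases} the transit time of $(s,v_0)$ by $1$ at the appropriate moment: flow entering $(s,v_0)$ before the change arrives at $v_0$ before time $2$, flow entering after the change arrives after time $3$, so a one-unit gap $[2,3]$ in arrivals at $v_0$ is created by a single change. Consequently $(v_0,t)$ is saturated in every maximum flow except during $[2,3]$, and by the duality of Theorem~\ref{trm:cut_flow_duality} every minimum cut must have $v_0\in S$ outside $[2,3]$ and $v_0\in\bar{S}$ during $[2,3]$ --- exactly the induction base required by Lemma~\ref{lem:binary_counting_gadget_final}. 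Note also that this one gadget already works for infinite considered time, and the finite-horizon case follows by simply restricting to $[0,6]$; you do not need your proposed case split in which the finite-horizon variant additionally exploits the time-horizon boundary (and your side claim that one capacity change provably cannot suffice for infinite time overstates what the paper shows --- that is only Conjecture~\ref{cjt:infTimeSingleCapacityChange_poly}).
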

	\begin{proof}
		For infinite considered time, we give the starting gadget $H_\text{start}$ presented in Figure~\ref{fig:inductionStartForExpCutInfiniteTimeDynamicTransitTimes}. Here $(v_0,t)$ is always saturated for any maximum flow $f$, except during $[2,3]$, when no flow can be routed over it due to the increase in transit time by $1$ of $(s,v_0)$. The corresponding minimum cut $S$ requires $v_0$ to be in $S$ always except during $[2,3]$, when $v_0$ has to be in $\bar{S}$. These are the changes $H_\text{start}$ needs to provide the induction start for the proof of Theorem~\ref{trm:weak_NP_hardness}.
		
		This clearly also works for only considering time in $[0,6]$ as in the proof of Theorem~\ref{trm:weak_NP_hardness}.
		\qed
	\end{proof}
	\begin{figure}
		\centering
		\includegraphics{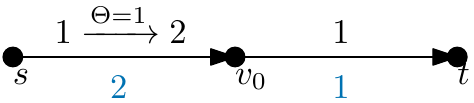}
		\caption{Construction of $H_\text{start}$, providing the partition changes of $v_0$ needed for $G_\ell$ with one change in transit time, black numbers are transit times, blue numbers are capacities.}
		\label{fig:inductionStartForExpCutInfiniteTimeDynamicTransitTimes}
	\end{figure}
	
	Note that the construction of
	Theorem~\ref{trm:exponentially_complex_cut} causes every minimum cut
	and every maximum flow to have exponential complexity.  In the
	following we show that exponentially many changes in cut or flow can
	occur independently.  Specifically, we provide constructions that
	require exponentially complex flows but allow for cuts of low
	complexity and vice versa.
	
	\subsection{Complex Flows and Simple Cuts (and Vice Versa)}
	\label{sec:complex-flows-simple-cuts}
	
	All above constructions require all minimum cuts \emph{and} all
	maximum flows to have exponential complexity.  Here, we show that
	flows and cuts can be independent in the sense that their required
	complexity can be exponentially far apart (in both directions).
	
	\begin{theorem}
		\label{trm:expFlow_simpleCut}
		There exist acyclic dynamic flow networks with only one capacity
		change where every maximum flow has exponential complexity, while
		there exists a minimum cut of constant complexity.  The same is true
		for static capacities and time-dependent transit times.
	\end{theorem}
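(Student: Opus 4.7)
The plan is to start from the exponentially complex construction $G_\ell$ of Lemma~\ref{lem:binary_counting_gadget_final} and prepend a single \emph{bottleneck edge} of constant capacity, yielding a simple minimum cut of the same value as the already present complex minimum cut while preserving exponential flow complexity.

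Concretely, I would introduce a new source vertex $s^*$ together with a single edge $(s^*, s)$ of constant capacity $C$, declaring $s^*$ the new source and the former source $s$ an internal vertex. The value $C$ is chosen so that $C \cdot T$ equals the capacity of the exponentially complex minimum cut $S^\text{cpx}$ of $G_\ell$. The static cut that places $s^*$ on the source side and every other vertex on the opposite side then has capacity exactly $C \cdot T$ and complexity zero, since no vertex ever changes partition. Extending $S^\text{cpx}$ by assigning $s^*$ to the source side for all times yields a second cut with the same capacity, so both cuts are minimum.

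The exponential flow complexity then follows from Lemma~\ref{obs:CutRestrictsFlow} applied to the extended $S^\text{cpx}$: every maximum flow must saturate each edge crossing $S^\text{cpx}$ from the source side to the opposite side at exactly the times it contributes, and may not send any flow backward across it. As in the proof of Theorem~\ref{trm:exponentially_complex_cut}, this forces the flow on, for instance, the edge $(a_{\ell,1}, v_\ell)$ to alternate exponentially often, so any maximum flow still has exponential complexity, while the cut $\{s^*\}$ vs. the rest has complexity zero.

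The main obstacle is numerical consistency: we need $G_\ell$ to admit a maximum flow whose inflow pattern at $s$ is uniform in time with total rate $C$, so that the constant-capacity bottleneck $(s^*, s)$ does not reduce the maximum flow value below $C \cdot T$. If the inflow in the original construction is not uniform, one can either rebalance flow across the $(s, v_k)$ edges, or insert a small constant-complexity smoothing gadget between $s^*$ and $s$ while keeping the total number of capacity changes at one. The static-capacity, time-dependent-transit-time variant follows by the same plan, applied to the transit-time version of $G_\ell$ built using the starting gadget of Figure~\ref{fig:inductionStartForExpCutInfiniteTimeDynamicTransitTimes}; the argument via Lemma~\ref{obs:CutRestrictsFlow} carries over unchanged.
\qed
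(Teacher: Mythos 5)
Your plan hinges on one quantitative claim that you yourself flag as ``the main obstacle'' and then wave away, and that claim is exactly where the construction breaks. For the cut $\{s^*\}$ to be minimum \emph{and} for the extended complex cut $S^{\text{cpx}}$ to remain minimum (which is what you need to invoke Lemma~\ref{obs:CutRestrictsFlow} and force exponential flow complexity), the new network must have maximum flow value exactly $C\cdot T=\capacity(S^{\text{cpx}})$. That requires $G_\ell$ to admit a maximum flow whose outflow from $s$ is constant at rate $C$ over essentially the whole horizon. But the structure of $G_\ell$ forbids this: by Lemma~\ref{obs:CutRestrictsFlow}, every maximum flow must saturate the cut edges incident to $s$ (the edges $(s,v_k)$ and, in $H_\text{start}$, the interplay of $(s,v_0)$ with the capacity change of $(v_0,t)$) precisely during the time intervals in which they contribute, and these intervals alternate with the exponentially oscillating cut; flow sent on such edges at other times cannot reach $t$ because all cut-crossing edges are already saturated, and strong flow conservation forbids parking it at a vertex. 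Hence the outflow from $s$ in any maximum flow of $G_\ell$ is forced to be non-uniform, a constant bottleneck $(s^*,s)$ strictly reduces the maximum flow value below $C\cdot T$, the extended $S^{\text{cpx}}$ is then no longer a minimum cut, and the argument for exponential flow complexity collapses along with the minimality of $\{s^*\}$. The proposed fixes do not help: a ``smoothing gadget'' would have to buffer flow, which strong flow conservation rules out, and any time-varying smoothing structure would need additional capacity changes beyond the single one already spent in $G_\ell$. A further problem is that demoting $s$ to an internal vertex invalidates the hypotheses under which Lemmas~\ref{lem:mimicking_gadget}, \ref{lem:binary_counting_gadget} and \ref{lem:binary_counting_gadget_final} were proved (there $s$ is the source and sits in $S$ at all times); once minimum cuts may place $s$ in $\bar{S}$, even the claim that the extended $S^{\text{cpx}}$ has the stated structure would have to be re-established from scratch.

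For comparison, the paper does not reuse $G_\ell$ at all here; it gives a separate, much simpler construction: a chain $v_0,\dots,v_k$ in which consecutive vertices are joined by an edge of transit time $2^{k-i-1}$ and a parallel edge of transit time zero, flow can enter $v_0$ only during $[0,1]$, and a low-capacity exit edge $(v_k,t)$ forces the single unit of flow to be spread over all $2^k$ distinct path lengths. This makes every maximum flow alternate exponentially often on the last zero-transit-time edge, while assigning all $v_i$ to $\bar{S}$ for all time is already a minimum cut with no partition changes. If you want to prove this theorem, constructing an instance where the cut is cheap by design (a single early bottleneck whose contribution interval is short) and the complexity is forced purely on the routing side, as in that construction, is the more promising route than trying to retrofit simplicity onto $G_\ell$, whose whole point is that \emph{every} minimum cut is complex.
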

	\begin{proof}
		Figure~\ref{fig:exponentialFlow_simpleCut} shows a graph with these
		properties. This is achieved by only allowing flow to enter $v_0$
		during $[0,1]$, but it has to leave $v_k$ during $[0,2^k]$ due to
		the reduced capacity of $(v_k,t)$ for a time horizon
		$T\geq2^k$. Apart from $v_k$ all $v_i$ are connected to the next
		$v_{i+1}$ with a pair of edges, with transit times $2^{k-i-1}$ and
		zero, all these edges have capacity $1$. So all $2^k$ paths of
		different transit time through the $v_i$ have to be used to route flow for
		a maximum flow. Every second of those paths has an even transit time, so
		flow has to traverse the edge with transit time zero between $v_{k-1}$ and
		$v_k$ every second integer time interval, which results in
		exponentially many changes in flow over that edge. However assigning
		all $v_i$ to $\bar{S}$ for all time is a minimum cut without
		partition changes.
		This generalizes to time-dependent transit times, as we can block the edge $(s,v_0)$ at time $\Theta=1$ by increasing its transit time to $T$ at that time.
		\qed
	\end{proof}
	\begin{figure}
		\centering
		\includegraphics{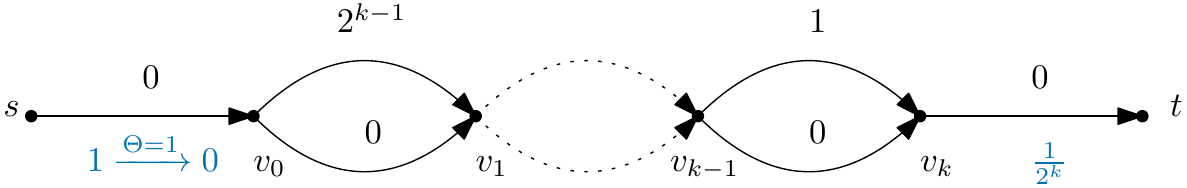}
		\caption{Example of a graph where any maximum flow contains exponentially many changes, but there is a minimum cut with no changes, black numbers are transit times, blue numbers are capacities, unspecified capacities are $1$.}
		\label{fig:exponentialFlow_simpleCut}
	\end{figure}
	
	% \subsection{Simple Flows and Complex Cuts}
	% \label{sec:simple-flows-complex-cuts}
	
	\begin{restatable}{theorem}{thmExpCutSimpleFlow}
		\label{trm:expCut_simpleFlow}
		There exist acyclic dynamic flow networks with only one capacity change
		where every minimum cut has exponential complexity, while there
		exists a maximum flow of linear complexity.  The same is true
		for static capacities and time-dependent transit times.
	\end{restatable}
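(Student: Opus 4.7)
The plan is to construct an acyclic dynamic flow network with only one capacity change in which every minimum cut has exponential complexity while some maximum flow has linear complexity. The approach is to start from the exponential-cut construction $G_\ell$ of Lemma~\ref{lem:binary_counting_gadget_final} and modify it so that the maximum flow is free to bypass the subnetwork that forces the cut complexity.

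Concretely, I would augment $G_\ell$ by attaching a direct path $s \to v_\ell \to t$ consisting of two new edges of the same static capacity $c$. Because these two edges lie symmetrically on either side of $v_\ell$ and have equal capacity, exactly one of them crosses the cut at each time regardless of the partition of $v_\ell$; their combined contribution is therefore the constant $c \cdot T$ for every cut. Hence the part of the cut capacity that is sensitive to $v_\ell$'s partition comes purely from the edges inside $G_\ell$, and the asymmetric forcing argument of Lemma~\ref{lem:binary_counting_gadget} is preserved: $v_\ell$ must still oscillate $2^\ell$ times in every minimum cut, which gives the desired exponential cut complexity. For the flow, a constant stream of rate $c$ along the bypass already provides a max-flow component of constant complexity.

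The main obstacle is Lemma~\ref{obs:CutRestrictsFlow}: every edge contributing to any minimum cut must be saturated by every maximum flow, and this is precisely what forces the maximum flows in Theorem~\ref{trm:exponentially_complex_cut} to be exponentially complex. To overcome this, I would scale down all internal capacities of $G_\ell$ (using the scaling flexibility from Lemma~\ref{lem:affine_stability}) so that they contribute a negligible amount to the max-flow value compared to $c \cdot T$. The delicate part of the argument is then to show that, after this rescaling, one can exhibit a specific maximum flow that routes essentially all flow along the bypass and whose behaviour on the interior $G_\ell$-edges can be consolidated into a simple, piecewise-constant pattern of only linearly many pieces, rather than the exponentially many pieces produced by blindly following the min-cut saturation constraints. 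Verifying that such a max flow is feasible, while simultaneously checking that every minimum cut still inherits $v_\ell$'s exponential oscillation, is the technical heart of the proof.

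Finally, the transit-time analogue of the statement follows by substituting the capacity-change starting gadget $H_{\text{start}}$ with its transit-time analogue used in the transit-time version of Theorem~\ref{trm:exponentially_complex_cut}, leaving the remainder of the construction and argument unchanged.
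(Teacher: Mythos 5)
There is a genuine gap, and it lies exactly where you place the ``technical heart'': the consolidation step cannot work, because your augmentation preserves the set of minimum cuts and therefore inherits the flow-complexity lower bound of Theorem~\ref{trm:exponentially_complex_cut}. Your balanced bypass $s\to v_\ell\to t$ adds the same amount $c\cdot T$ to the capacity of \emph{every} cut (this is precisely your symmetry observation), so the minimizing vertex-partition schedules are unchanged: in every minimum cut of the augmented network, $v_\ell$ and the $a_{\ell,i}$ still oscillate exactly as in Lemma~\ref{lem:binary_counting_gadget_final}. But then Lemma~\ref{obs:CutRestrictsFlow} applies verbatim to the interior edges: the edge $(a_{\ell,1},v_\ell)$ alternates exponentially often between being oriented from $S$ to $\bar S$ (where every maximum flow must saturate it) and from $\bar S$ to $S$ (where every maximum flow must carry zero on it). Scaling the interior capacities down by Lemma~\ref{lem:affine_stability} shrinks the \emph{values} involved but not the \emph{number} of forced alternations between a positive value and zero, and making them zero destroys the forcing of the cut. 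Hence in your construction every maximum flow still has exponential complexity, and no choice of $c$ or rescaling can produce the claimed linear-complexity flow; the obstacle you defer is not a technicality but an impossibility for this approach.

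The paper avoids this by arranging the duality constraints the other way around: it builds a network in which the only edge that is ever saturated by a maximum flow is a single small-capacity source edge $(s,s')$, so by Theorem~\ref{trm:cut_flow_duality} that is the only edge allowed to contribute to a minimum cut, and the flow itself can be routed entirely along simple bypass paths with linearly many changes. The exponential cut complexity then comes not from contributing edges (which would, via Lemma~\ref{obs:CutRestrictsFlow}, force flow complexity) but from the need to \emph{avoid} cutting the many unsaturated paths of pairwise distinct integer transit times: to keep all those edges out of the cut, an interior vertex must flip partition in every integer interval, giving exponentially many changes while leaving the maximum flow untouched. If you want to salvage your plan, you must redesign the gadget so that the forced oscillation of the cut is driven by edges that never contribute to any minimum cut, rather than reusing $G_\ell$, whose cut oscillation is driven by alternating contributing edges.
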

	\begin{proof}
		Figure~\ref{fig:simpleFlow_exponentialCut} shows a graph where any minimum cut needs to contain exponentially many changes, but a maximum flow with linearly many changes exists. The idea of this construction is that only $\frac{1}{2^k+1}$ flow can enter $s'$ at any time and flow can only leave from $x_{2k}$ to $t$ during one integer interval after $2^k$ and before the time horizon $T=2^k+1$. All other edges have capacity $1$. There is a set of bypass paths through the $y_i$, that allows flow to be routed from $s'$ to $x_{2k}$ in any integer time up to $2^k-1$, by connecting $y_i$ to $y_{i+1}$ with a pair of edges with transit times $2^{k-i-1}$ and zero. So $\frac{2^k}{2^k+1}$ flow can be routed through the graph. The section where exponential cuts will be necessary consists of the $x_i$. The initial $x_0$ can be reached from $s'$ with transit time 1, internally each $x_i,i<k$ is connected with the next $x_i+1$ with a pair of edges with transit times $2^{i+1}$ and zero. The later $x_i,i\geq k$ are connected to the next $x_{i+1}$ with a pair of edges of transit time $2^{2k-i}$ and zero.
		
		If the bypass paths are used no additional flow can move through the upper paths via the $x_i$ because of the capacity of $(s,s')$ and the lack of a transit time $2^k$ path through the $x_i$. There is no maximum flow that saturates any edge except for $(s,s')$ at any time, so, because of the cut flow duality of Theorem~\ref{trm:cut_flow_duality}, this is the only edge that can contribute to the capacity of a minimum cut. Since there are integer transit time paths from $s'$ to $x_{2k}$ for any transit time up to $2^k-1$, we know that $s'$ has to be in $\bar{S}$ during $[1,2^k+1)$ to prevent any other edges from contributing to the cut. This already results in the capacity of the cut being at least $\frac{2^k}{2^k+1}$, so $s'$ has to be in $S$ during $[0,1)$. To prevent any other contributions to the capacity of the cut, $x_k$ needs to change partition exponentially often. Observe that any path's transit time from any $s'$ to $x_k$ plus the $1$ from $(s',x_0)$ marks a timing where $x_j$ has to be in the $S$ partition. Likewise $2^k$ minus path transit times of paths from $x_k$ to $x_{2k}$ mark timings where $x_k$ has to be in the $\bar{S}$ partition. So during any integer interval before $2^k$ starting with an odd time, $x_k$ has to be in $S$ and at any such interval starting at an even time, it has to be in $\bar{S}$.
		
		The flow through this graph can easily be represented in linear time with at most one change in flow rate per edge. Using the bypass paths as described above, no flow gets routed through any $x_i$. The bypass edges are ordered to ensure that flow moves through any $y_i$ during $[2^k-2^{k-i}+1,2^k+1)$ when using all different path transit times as required for this maximum flow. This means that each edge from $y_i$ to $y_{i+1}$ in the bypass edges sends $\frac{2^i}{2^k+1}$ during $[2^k-2^{k-i}+1,2^k-2^{k-i-1}+1)$ for the $2^{k-i-1}$ transit time edge and during $[2^k-2^{k-i-1}+1,2^k+1)$ for the edge with transit time zero. Flow over the remaining edges is easily representable as well; $(s,s')$ is saturated during $[1,T)$, $(x_{2k},t)$ sends $\frac{2^k}{2^k+1}$ during $[2^k,T)$ and $(s',y_0),(y_k,x_{2k})$ only exist to improve the visual representation, flow over them is given by $(s,s'),(x_{2k},t)$.
		
		This generalizes to time-dependent transit times, as we can activate the edge $(x_{2k},t)$ at time $\Theta=2^k$ by decreasing its transit time from $T$ to $0$ at that time.
		\qed
	\end{proof}
	\begin{figure}
		\centering
		\includegraphics[width=\textwidth]{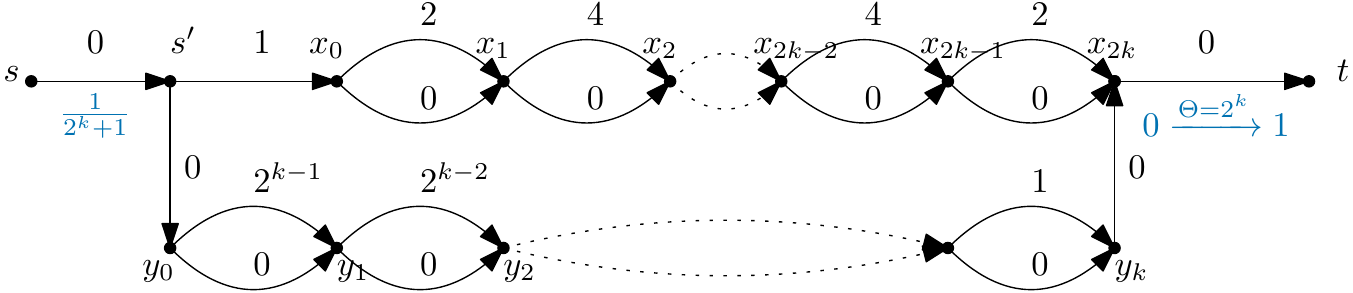}
		\caption{Example of a graph where any minimum cut contains exponentially many changes, but there is a flow with only linearly many changes, black numbers are transit times, blue numbers are capacities, unspecified capacities are $1$, time horizon is $T=2^k+1$.}
		\label{fig:simpleFlow_exponentialCut}
	\end{figure}
	
	\newpage
	\bibliography{references}

	\newpage
	\appendix
	
	\section{On the Strong NP-Completeness of Dynamic Flows with
		Time-Dependent Capacities}
	\label{sec:strong-np-compl}
	
	As mentioned in the introduction, a poof for strong NP-completeness
	for the dynamic flow problem with time-dependent capacities was
	claimed by Sha Cai and Wong \cite[Theorem~2]{sha2000maximum}. The
	exponential complexity we proof in
	Theorem~\ref{trm:exponentially_complex_cut} does not disproof that the
	maximum dynamic flow problem is in NP, but it invalidates the use of
	the two canonical witnesses for verifying a solution in polynomial
	time. As such the claim that the maximum dynamic flow problem is
	obviously in NP is in doubt.
	
	The reduction from the 3-Dimensional Matching Problem suffers from
	the issue that flow can take a path from $s$ to $t$ using edges
	belonging to different triplets of te 3DM. One example of this is
	visualized in
	Figure~\ref{fig:dynamicFlow_3DM_reduction_counterexample}. The 3DM
	Problem has four possible triples
	$M=\{(w_1,x_1,y_1),(w_1,x_2,y_2),(w_2,x_2,y_3),(w_3,x_3,y_3)\}$ to hit
	each of the three elements of each set
	$\{w_1,w_2,w_3\},\{x_1,x_2,x_3\},\{y_1,y_2,y_3\}$ exactly once. This
	is impossible because the need to hit $y_1$ and $w_3$ necessitates the
	use of $(w_1,x_1,y_1),(w_3,x_3,y_3)$, but the remaining three elements
	$w_2,x_2,y_2$ cannot be covered by either of the remaining triples in
	$M$. However there are three edge disjoint paths
	$(w_1,x_1,y_1),(w_2,x_2,y_2),(w_3,x_3,y_3)$ in the induced graph,
	allowing a flow of $3$ to pass from $s$ to $t$, which should
	correspond to a solvable 3DM instance.
	
	\begin{figure}
		\centering
		\includegraphics{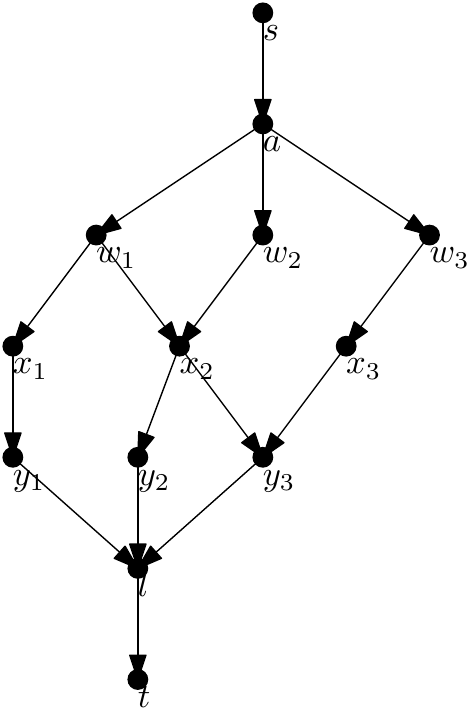}
		\caption{Visualization of the counterexample to the reduction of 3DM to maximum dynamic flow, where the 3DM instance is not solvable, but the induced maximum dynamic flow instance achieves the required flow of $3$, implying a solution to the 3DM problem.}
		\label{fig:dynamicFlow_3DM_reduction_counterexample}
	\end{figure}
	
	Furthermore, it seems unlikely that the dynamic flow problem with time-dependent capacities, where capacities are piecewise constant functions, capacity changes happen at integer times and edge transit times are integer, is strongly NP hard. This case can be solved in pseudo-polynomial time using the time expanded graph, which has one vertex for every integer time and edges connecting instances with the correct time difference. In the time expanded graph, there are no more changing capacities, so it can be solved using temporally repeated flows. With this in mind a strong NP-hardness proof would show $\text{P}=\text{NP}$.
	
	\section{Strong NP-Hardness for Simple Flow Paths}
	\label{sec:strong-np-hardness}
	
	During our studies, we stumbled upon the following related NP-hardness
	reduction.  However, it is somewhat beyond the scope of the paper and
	thus only mentioned here in the appendix.
	
	\begin{theorem}
		\label{trm:strong_NP_hardness_simple_paths}
		The (maximum) dynamic flow problem restricted to simple flow paths with time-dependent capacities or time-dependent transit times is strongly NP-hard.
	\end{theorem}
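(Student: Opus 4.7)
The plan is to reduce from \textsc{3-Dimensional Matching} (3DM), which is strongly NP-complete. Given disjoint sets $W, X, Y$ of size $n$ and a family $M = \{\tau_1, \ldots, \tau_m\}$ of triples $\tau_j = (w_j, x_j, y_j) \in W \times X \times Y$, I will construct a dynamic flow network whose maximum simple-path flow has value $n$ exactly when $M$ contains $n$ triples covering each element of $W \cup X \cup Y$ once.

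The underlying graph is the natural one used (unsuccessfully) by Sha, Cai and Wong in the counterexample of Figure~\ref{fig:dynamicFlow_3DM_reduction_counterexample}: one vertex per element of $W \cup X \cup Y$ and, for every $\tau_j \in M$, an edge sequence $s \to w_j \to x_j \to y_j \to t$ with element vertices shared across triples. To rule out the spurious simple $s$-$t$ paths that mix edges from distinct triples, I assign each $\tau_j$ a disjoint activation window $I_j$ of length $1$, separated from the other $I_{j'}$'s by a gap exceeding the total transit time of a triple path (e.g., $I_j = [3j, 3j+1)$ with per-edge transit time $\tfrac{1}{4}$). The capacities of the four $\tau_j$-edges are set to $1$ on $I_j$ and $0$ elsewhere, and the transit times are chosen so that a unit of flow entering at $s$ during $I_j$ stays inside $I_j$ until it reaches $t$. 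Any simple $s$-$t$ path combining an edge of $\tau_j$ with an edge of $\tau_{j'}$ then has no instant at which two consecutive edges are simultaneously open, so it carries no flow. Adding a capacity-$1$ throttling sub-edge at every element vertex forces any flow of value $n$ to decompose into $n$ triple-paths that are vertex-disjoint on $W \cup X \cup Y$, i.e., a perfect matching. All numerical parameters are bounded by polynomials in $n + m$, which yields strong NP-hardness, and the construction transfers to time-dependent transit times by replacing each zero-capacity interval with a single transit-time spike that pushes flow beyond the horizon.

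The simple-path restriction is what makes the timing argument binding: without it, a flow unit could arrive at a shared element vertex during $I_j$, wait there, and then continue along an edge of $\tau_{j'}$ once $I_{j'}$ opens, reintroducing precisely the Sha--Cai--Wong bug. The main technical obstacle is therefore verifying that, at every element vertex shared by two triples $\tau_j, \tau_{j'}$, the arrival time of a $\tau_j$-flow unit never falls into the open interval of an outgoing $\tau_{j'}$-edge. With windows spaced by more than the maximal path transit time this reduces to a routine interval non-overlap check, but it has to be carried out uniformly over all shared-vertex pairs, and it is the only place where the construction could fail if the spacing constant were chosen too small.
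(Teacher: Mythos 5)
There is a genuine gap at the step ``Adding a capacity-$1$ throttling sub-edge at every element vertex forces any flow of value $n$ to decompose into $n$ triple-paths that are vertex-disjoint on $W\cup X\cup Y$.'' In this model a capacity bounds the flow \emph{rate} at each instant, not the total volume over the horizon, so a throttling edge of capacity $1$ that is usable throughout the horizon admits one unit of flow during \emph{every} activation window whose triple passes through that element. Because you made the windows $I_j$ pairwise disjoint, the triples no longer interact at all: routing one unit along the triple path of $\tau_j$ during $I_j$, simultaneously for all $j$, is feasible (at every instant all rates are at most $1$ and every edge used is inside its window), and each of these paths is simple. Hence the maximum value is $m$ (in any case at least $n$) for every instance with $m\ge n$ triples, regardless of whether a perfect matching exists, so the reduction does not separate yes- from no-instances. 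The time separation you introduced to kill the mixed paths of the Sha--Cai--Wong counterexample is exactly what destroys the covering constraint that each element be used at most once, and no rate capacity at the element vertices can restore it: a unit must cross the throttle at rate at least $1$ inside its unit-length window, and that same rate is then available again in every other window. What one would need is a budget on total per-vertex throughput across time, which the model does not offer.

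The paper takes a different route and does not try to repair the 3DM reduction (its Appendix on the Sha--Cai--Wong claim indicates why this is delicate): it invokes the strong NP-hardness construction of Hall, Hippler and Skutella \cite{hall2007multicommodity} for flow over time with simple flow paths and no storage, noting that their construction needs only a single commodity plus a ``traffic light'' gadget that opens an edge exactly during an interval $[a,b)$; such a gadget is immediate with time-dependent capacities (capacity zero outside $[a,b)$) and with time-dependent transit times (transit time $T$ outside $[a,b)$). If you want a direct 3DM-based proof, you must add a mechanism enforcing the per-element usage bound across time; as written, your construction lacks precisely that.
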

	\begin{proof}
		This follows from the NP-hardness proof for the multi-commodity flow over time problem with simple flow paths and without storage presented by Hall, Hippler and Skutella \cite[Theorem~7]{hall2007multicommodity}. Their construction requires a traffic light gadget, but otherwise works with only a single commodity of flow. The traffic light gadget allows setting the capacity of an edge to zero except for a interval $[a,b)$ during which the edge offers usable capacity. Since the scenario considered here allows adjusting an edge's capacity based on time, creation of such a gadget is trivial for time-dependent capacities. For time-dependent transit times, we can construct a traffic light gadget, allowing flow over $e$ only during $[a,b)$ by setting the transit time of $e$ to $T$ during $[0,a)$ and $[b,T)$.
		\qed
	\end{proof}
	
\end{document}